\documentclass[conference,compsoc]{IEEEtran}
\usepackage{blindtext}
\usepackage{adjustbox}
\usepackage{multirow}
\usepackage{cite}
\usepackage{amsmath,amssymb,amsfonts}
\usepackage{algorithm,algpseudocode}
\algrenewcommand\algorithmicindent{0.5em}
\usepackage{wrapfig,lipsum,booktabs}
\usepackage{subfigure}
\usepackage{float}  
\usepackage{graphicx}
\usepackage{textcomp}
\usepackage{xcolor}
\usepackage{graphicx}
\usepackage{subcaption}
\usepackage{amsthm}
\usepackage{enumitem}
\usepackage[normalem]{ulem}

\newtheorem{definition}{Definition}
\newtheorem{theorem}{Theorem}[section]
\newtheorem{lemma}{Lemma}
\newtheorem{assumption}{Assumption}

\usepackage{hyperref}
\usepackage[compact]{titlesec}

\let\svthefootnote\thefootnote
\newcommand\freefootnote[1]{%
  \let\thefootnote\relax%
  \footnotetext{#1}%
  \let\thefootnote\svthefootnote%
}


\makeatletter
\algnewcommand{\LineComment}[1]{\Statex \hskip\ALG@thistlm \(\triangleright\) #1}
\makeatother

\usepackage[mathlines,switch]{lineno}

\makeatletter

\let\old@ps@IEEEtitlepagestyle\ps@IEEEtitlepagestyle
\def\confheader#1{%
    \def\ps@IEEEtitlepagestyle{%
        \old@ps@IEEEtitlepagestyle%
        \def\@oddhead{\strut\hfill#1\hfill\strut}%
        \def\@evenhead{\strut\hfill#1\hfill\strut}%
    }%
    \ps@headings%
}
\makeatother

    

\begin{document}

\title{zkSTAR: A Zero Knowledge System for Time Series Attack Detection Enforcing Regulatory Compliance in Critical Infrastructure Networks}
\author{
    \IEEEauthorblockN{Paritosh Ramanan, Sathwik Yamana, H M Mohaimanul Islam, Abhiram Reddy Alugula}
    \IEEEauthorblockA{
        School of Industrial Engineering and Management,
        Oklahoma State University \\
        Stillwater, Oklahoma, USA \\
        \{paritosh.ramanan, syamana, h\_m\_mohaimanul.islam, aalugul\}@okstate.edu
    }
}


\maketitle

\begin{abstract}
Industrial control systems (ICS) form the operational backbone of critical infrastructure networks (CIN) such as power grids, water supply systems, and gas pipelines. As cyber threats to these systems escalate, regulatory agencies are imposing stricter compliance requirements to ensure system-wide security and reliability. A central challenge, however, is enabling regulators to verify the effectiveness of detection mechanisms without requiring utilities to disclose sensitive operational data. In this paper, we introduce zkSTAR, a zero-knowledge based cyberattack detection framework that leverages zk-SNARKs to enable regulatory compliance while delivering provable detection guarantees with complete data privacy. Our approach builds on established residual-based statistical hypothesis testing methods applied to state-space detection models. Specifically, we design a two-pronged zk-SNARK architecture that enforces (i) temporal consistency of the state-space dynamics and (ii) statistical consistency of the detection tests, enabling regulators to verify correctness and prevent suppression of alarms without visibility into utility-level data. We formally analyze the soundness and zero-knowledge properties of our framework and validate its practical feasibility through computational experiments on real-world ICS datasets. Our work demonstrates that zkSNARKs can provide a compliant, scalable, privacy-preserving alternative for detecting data-driven cyberattacks on ICS driven critical infrastructure networks.
\end{abstract} 

\begin{IEEEkeywords}
Zero Knowledge Proof, Industrial Control Systems, Data-driven attacks, Regulatory Compliance.
\end{IEEEkeywords}


\section{Introduction}\label{sec:intro}
The convergence of information and operational technologies (IT–OT) through widespread IoT adoption has expanded the attack surface for data-driven cyber intrusions. Such attacks often target industrial control systems (ICSs), degrading asset performance and triggering cascading failures across critical infrastructure networks (CINs). While detection occurs locally at utilities, regulatory bodies such as Information Sharing and Analysis Centers (ISACs) aggregate alarms in real time to enhance situational awareness and reduce false positives \cite{dhs_oig2}. The effectiveness of these regulators, however, hinges on the integrity and reliability of local detection data. Recent real-world incidents \cite{langner2011stuxnet,jeffries2022cyber,salazar2024tale} have demonstrated that adversaries can compromise both IT and OT layers to obscure, delay, or disable detection of underlying anomalies. Such incidents expose a fundamental regulatory blind spot pertaining to verifying the integrity of reported detections without re-auditing raw operational data. Moreover, inconsistent or low-quality local insights can undermine the ISAC’s ability to construct a reliable network-wide situational awareness model, potentially causing unnecessary service disruptions \cite{dhs_oig2}. To address this gap, we introduce zkSTAR, a framework enabling publicly verifiable detection of data-driven ICS attacks in complete zero knowledge of local ICS state-space dynamics.

zkSTAR addresses a critical regulatory gap by enabling provable temporal and statistical consistency verification of ICS detections. Leveraging knowledge-soundness and provably consistent outcomes, it ensures that compromised utilities cannot falsify or suppress alarms even if their IT–OT stacks are breached. The framework enables verifiable regulatory oversight while preserving utility data confidentiality \cite{dhs_oig2}, eliminating the need for costly, privacy-intrusive audits that have long impeded secure information sharing \cite{nolan2015cybersecurity}. Additionally, by adopting an on-demand, lazy proof generation paradigm, zkSTAR ensures a low utility overhead as well.

Conventional ICS anomaly detection relies on state-space models that estimate expected sensor behavior and compute residuals to flag deviations from steady-state conditions \cite{li2020detection}. Statistical hypothesis tests on these residuals distinguish normal from abnormal operation, triggering alarms when observed measurements significantly diverge from model predictions \cite{li2020deep}. With existing approaches, regulators can verify such detection outcomes only through full data audits of ICS operations—requiring utilities to relinquish proprietary IoT datasets and raising severe privacy and data-governance concerns. Moreover, utilities must demonstrate both temporal consistency wherein successive state estimates are derived from prior ones; and statistical consistency in which alarms follow valid hypothesis test procedures. These audits are costly, slow, and incompatible with real-time oversight.


To overcome these limitations, zkSTAR delivers provable temporal and statistical consistency guarantees for state space driven statistical detection of data driven ICS attacks. At its core, zkSTAR is specifically designed to validate consistency claims for ICS systems with highly non-linear state space dynamics in complete zero knowledge. zkSTAR is based on the zkSNARK (zero-knowledge succinct, non-interactive argument of knowledge) paradigm providing publicly verifiable attack detection outcomes for a state space modeling paradigm in a trustworthy fashion. zkSTAR operates in an on-demand compliance mode, where regulators issue proof-generation requests to utilities whenever verification of local detections is required. We present a detailed overview of our contributions as follows:
\begin{itemize}[noitemsep, topsep=0pt, parsep=0pt, partopsep=0pt]
    \item We develop a zkSNARK based prediction framework that leverages an extended Kalman filter based model to provide provable detection guarantees for data driven attacks on utility ICS in complete zero-knowledge of local state space dynamics.
    
    \item We establish formal security guarantees for zkSNARK driven regulatory compliance with temporal and statistical consistency proofs of detection outcomes that also ensure resistance to attack-suppression under knowledge soundness property.  
  
    \item We design a zkSNARK based kernel decomposition scheme that provides the ability to prove temporal and statistical consistency claims in a scalable and computationally efficient manner.

    \item We detail the performance of zkSTAR with respect to detailed case studies involving real-world datasets using a REST-API driven containerized testbed. 
\end{itemize}

Overall, we show that zkSTAR enforces integrity of detection proofs under adversarial falsification, preventing utilities from reusing stale states or forging alarms—making it a security enforcement layer for regulatory trust.

\section{Related Works}\label{sec:related_works}
Attacks on industrial control systems (ICSs) that target IT layers can often be detected through network traffic monitoring~\cite{caselli2016specification,ye2004robustness}. However, data-driven attacks that manipulate sensor or control data can evade such approaches by simultaneously impacting both IT and OT layers~\cite{urbina2016limiting,kang2014cyber,drias2015taxonomy}, potentially causing cascading failures across critical infrastructure. More recently, there have been several attack vulnerabilities that have been discovered such as like Industroyer and Industroyer2 \cite{salazar2024tale}, Triton \cite{jeffries2022cyber} and Stuxnet \cite{langner2011stuxnet}. These vulnerabilities have demonstrated the feasibility of joint manipulation of control systems and monitoring data motivating the need for verifiable detection mechanisms.


To address some of these challenges, several \emph{model-based} mechanisms have been proposed~\cite{huang2018online,hoehn2016detection,rahman2017multi}. Among these, \emph{Kalman filter based} approaches are particularly robust and flexible~\cite{li2020deep,li2020detection,li2021degradation,ramanan2021blockchain,li2022online}, leveraging residuals from state-space models as statistical evidence for anomaly detection. These residual-based tests differentiate routine faults from malicious attacks~\cite{li2021degradation} and can support decentralized, network-wide alarm dissemination~\cite{ramanan2021blockchain}.

From the compliance perspective, however, it is essential for the regulator to receive timely notifications regarding attack detection at utility stakeholders in addition to verifying the integrity of attack detection so as to limit its network-wide impacts and reduce false alarm rates \cite{ramanan2021blockchain,dhs_oig2}. Therefore, regulatory compliance goals would ideally enable stakeholders to provide publicly verifiable detection outcomes that are accurate and trustworthy without exposing private operational data. Achieving such verifiable compliance under privacy constraints motivates the use of zero-knowledge proofs \cite{narayan2015verifiable}, enabling a prover to convince a verifier of correctness without revealing sensitive information.


Originating from foundational work by Goldwasser \emph{et~al.}~\cite{goldwasser2019knowledge} and Blum \emph{et~al.}~\cite{blum2019non}, zero-knowledge proofs evolved into succinct, non-interactive forms (zk-SNARKs) that support efficient verification~\cite{kilian1992note,micali2000computationally,gennaro2013quadratic,parno2016pinocchio,ben2013snarks,costello2015geppetto,ben2014succinct}. Recent applications in machine learning verify training integrity and model ownership~\cite{huang2022zkmlaas,wang2022ezdps,nguyen2022preserving,yang2023fedzkp}. While existing zk-SNARK frameworks enable general-purpose verifiable computation, none target the verification of dynamical system integrity or hypothesis-test correctness which are cornerstones of industrial attack detection. Recent works on verifiable differential privacy \cite{narayan2015verifiable} and secure multiparty analytics \cite{froelicher2020drynx} focus primarily on data confidentiality but not on verifiable detection correctness, a gap zkSTAR directly addresses. To alleviate this core critical gap, this paper introduces zkSTAR, a framework that enables provable, privacy-preserving verification of correct execution of Kalman filter based detection mechanisms, paving the way for regulatory compliance in CINs.




\section{Nonlinear State Space Modeling}\label{sec:ssm}
For developing zkSTAR, we leverage the extended Kalman Filter (EKF) state space models owing to their flexibility in handling highly non-linear spatial and temporal interdependencies among several ICS components \cite{coskun2017long}. Therefore, we characterize the EKF-based, utility-level attack detection model as a foundational component of zkSTAR for achieving regulatory compliance under zero knowledge. Within zkSTAR, we develop a state-space formulation capable of identifying attack-induced anomalies through locally executed statistical hypothesis tests.

\subsection{Extended Kalman Filter Model}
We present a brief overview of the EKF model in this section while a detailed description of the model has been provided in Appendix \ref{app:ekf_app}. Let {${g}, {h}$} denote the state transition and observation functions respectively, $x_t\in R^{m}$ represents the latent space embedding, {$u_t\in R^{m}$} represents the control action and $y_t\in R^{d}$ represents noisy sensor measurements from asset sensors. Additionally, let process and measurement noises follow multivariate normal distributions $N(0,Q_t)$ and $N(0,R_t)$ respectively. Now, we define our EKF model based on the following equations:
\begin{gather}
    \hat{x}_{t|t-1}=g(\hat{x}_{t-1|t-1},u_{t-1}) \label{eq:KLQG1},\\
    r_t=y_t-h(\hat{x}_{t|t-1}) \label{eq:KLQG2},\\
    P_{t|t-1} = G_tP_{t|t-1}G^T_t + Q_{t-1} \label{eq:KLQG3},\\
    S_t = (H_tP_{t|t-1}H^T_t+R_t) \label{eq:KLQG4},\\
    K_t = P_{t|t-1}H^{T}_{t}S_t^{-1} \label{eq:KLQG5}\\
    \hat{x}_{t|t}=\hat{x}_{t|t-1}+K_tr_t \label{eq:KLQG6}\\
    P_{t|t} = (I - K_tH_t)P_{t|t-1} \label{eq:KLQG7}
\end{gather}
In our EKF model, we have $P_{t|t-1},P_{t|t}$ represent the predicted and the updated covariance estimates respectively, while $S_t$ represents the residual covariance at $t$. The state transition and the observation matrices at $t$ given by $G_t, H_t$ respectively are computed using the Jacobians of $g,h$.
\subsection{Hypothesis Test for Anomaly Detection}\label{subsec:hyp_test}
In this paper we focus primarily on false data injection (FDI) attacks \cite{li2020deep} wherein an adversary artificially injects noise into sensor measurements as a means to induce catastrophic damage to critical infrastructure. We consider residual covariance $S_t = U_t \Sigma_t U_t^{T}$ coupled with residual measurements $r_t$ as a means to detect such attacks \cite{li2020detection}. We denote the standardized vector of PC scores as $\tau_t = (S_t)^{-1/2}(r_t) = U_t\Sigma_t^{-1/2}(r_t)$ such that $\tau_t \sim N(0,I)$ \cite{li2020detection,hwang2017chi}. This implies that $||\tau_t||^2_2 \sim \chi^2_p$, where $p<d$ is the degrees of freedom corresponding to the number of principal components used \cite{hwang2017chi}. 
\begin{equation}
\rho =
\begin{cases}
1, & \text{if } T_{\chi^2,t} = \|\tau_t\|^2_2 > \chi^2_{m,\alpha}, \\
0, & \text{otherwise.}
\end{cases}
\label{eq:alarm_condition}
\end{equation}
A generic hypothesis testing procedure applicable to sensor and residual monitoring has been discussed in greater detail in \cite{li2021degradation}. We define a measure of \emph{probabilistic similarity} to help establish the distinctness of two sensor measurement vectors as stated in Definition \ref{defn8}.
\begin{definition}[Probabilistic Similarity]\label{defn8}
Two distinct $\tilde{y}_{t+1}, y_{t+1}$ with corresponding test statistics $\tilde{T}_t=\verb|ChiTestStat|(\tilde{y}_{t+1};\alpha,m)$ and $T_t=\verb|ChiTestStat|(y_{t+1};\alpha,m)$ are deemed to be probabilistically similar if either $\tilde{T}_t > \chi^2_{p,\alpha},\ T_t > \chi^2_{p,\alpha}$ or $\tilde{T}_t \leq \chi^2_{p,\alpha},\ T_t \leq \chi^2_{p,\alpha}$.
\end{definition}
The concept of probabilistic similarity is particularly useful while handling issues for numerical precision issues wherein the sensor data measurement stored and processed in the ICS differs from the original value recorded by the sensors. We proceed towards characterizing the threat model and security guarantees.


\subsection{Threat Model and Security Guarantees}
Our threat model considers adversaries whose objective is to induce catastrophic physical damage through statistically significant manipulation of sensor measurements. Adversaries may compromise sensor hardware or manipulate IT and SCADA systems to suppress alarms, but are assumed to lack long-term persistent control over sensing and control hardware. Therefore, we exclude gradual degradation based attacks \cite{li2021degradation} requiring persistent asset control over weekly or monthly horizons. Considering our focus on FDI attacks, adversarial data manipulation is assumed to occur at a single point in the ICS hierarchy, ultimately propagating to the anomaly detector. Our assumptions on the nature of attack and the underlying EKF model are formally defined as per the following assumptions.

\begin{assumption}[Additive Perturbations]\label{as:as1}
The reported sensor measurement $y^{rep}_t$ presented to the $\chi^2$ anomaly detector relates to the true post-attack measurement $y^{real}_t$ by an additive perturbation $\Delta_y \in \mathbb{R}^d$, as follows
\[y^{rep}_t = y^{real}_t + \Delta_y\]
\end{assumption}
Assumption~\ref{as:as1} characterizes the discrepancy $\Delta_y$ arising from benign sources such as sensor calibration errors or adversarial manipulation of the data pipeline between the point of attack and the detector.

\begin{assumption}[Operational Significance]\label{as:as2}
An attack is deemed operationally significant if the residual induced by the true post-attack sensor measurement satisfies $\|r^{real}_t\| \geq \Gamma$, where $r^{real}_t = y^{real}_t - h(\hat{x}_{t|t-1})$ and $\Gamma > 0$.
\end{assumption}
Assumption~\ref{as:as2} formalizes a system-specific damage threshold $\Gamma$, below which residual perturbations are insufficient to induce material impact on ICS assets \cite{li2020deep,ramanan2021blockchain}.

\begin{assumption}[EKF Sensitivity]\label{as:as3}
The test statistics $T^{rep}_t$ and $T^{real}_t$, corresponding to $y^{rep}_t$ and $y^{real}_t$ respectively, satisfy
\[\Pr\Big[T^{rep}_t < \chi^2_{m,\alpha} \mid T^{real}_t > \chi^2_{m,\alpha}\Big] \leq \epsilon_{max}\]
\end{assumption}
Assumption~\ref{as:as3} bounds $\epsilon_{max}$, the probability that a reported measurement yields a detection outcome inconsistent with the true post-attack measurement. Subject to Assumptions~\ref{as:as1} and~\ref{as:as2}, a well-formed detector ensures $\epsilon_{max} \ll \alpha$, establishing that alarm suppression through data manipulation alone is a low-probability event; formal bounding of $\epsilon_{max}$ in terms of attack magnitude and EKF covariance structure is deferred to future work.

The security objective of zkSTAR is to guarantee that residual and state measurements are legitimate outcomes of a committed EKF model and that anomalies are detected under a consistent hypothesis testing framework. The security objective ensures that any operationally significant attack must trigger an alarm or violate the cryptographic guarantees of zkSTAR.

\section{Theoretical Foundations of zkSTAR}
Detecting attacks requires monitoring covariance matrices of the residuals to drive the underlying $\chi^2$ hypothesis tests. However, publicly validating alarms by an external entity such as a regulator would require the disclosure of residual vectors and covariance matrices that enable these hypothesis tests. Therefore, a critical regulatory challenge is the verification of the hypothesis test outcomes as reported by the utilities in the absence of their state space residual and covariance estimates. Zero-knowledge proofs form a viable alternative for solving these critical pain points by helping achieve public verification of utility level attack detection outcomes for regulatory purposes.
Therefore, in this paper, we employ zero knowledge proofs to enable regulatory compliance and verification. We concentrate on a utility-oriented subsystem represented by a state-space model as outlined in Section \ref{sec:ssm} that is meant to be captured in a zkSNARK framework. As a result, our methodology specifically leverages zk-SNARKs to assert temporal consistency and statistical consistency for utility-level $\chi^2$ hypothesis tests in zero-knowledge of local state space dynamics.

\subsection{Zero Knowledge Preliminaries}
We begin our discussion of zk-SNARK based regulatory compliance schemes for cyber attack detection by establishing five fundamental definitions. 
\begin{definition}[Hash Function]\label{defn2}
A hash function denoted by \verb|hash| can be defined such that it can consume arbitrary length inputs so as to map to fixed length outputs of size $n$
\[
\verb|hash| : \{0,1\}^* \rightarrow \{0,1\}^n
\]
\end{definition}
Definition \ref{defn2} portrays a hashing function used to ensure the integrity of state space estimates. We assume \texttt{hash} is a cryptographic hash function satisfying collision resistance and preimage resistance. Since EKF model estimates usually lie in a bounded, predictable domain, a plain hash commitment is susceptible to enumeration by an honest-but-curious verifier. Therefore, we use nonce-based hash commitments $\verb|Comm|(m) = \verb|hash|(m||N)$ where $m,N$ form the message and unique nonce strings respectively.


\begin{definition}[Pretrained State Space Model]\label{defn1}
The non-linear Kalman Filter based state space can be represented by a pretrained model $\mathcal{M}$ parametrized by $\theta$ such that the following conditions hold
\[
\Delta^{in}_t = (\mathcal{H}^{in}_t,\hat{x}_{t|t}, y_t)\text{, }\mathcal{M}(\Delta^{in}_t;\theta) \mapsto \Delta^{out}_t
\]
\[
\Delta^{out}_t = (\mathcal{H}^{out}_{t},\rho_t,\hat{x}_{t+1|t+1}, r_t, K_t, Q_t, R_t)
\]
\[
\mathcal{H}^{in}_t = \verb|Comm|(\hat{x}_{t|t}), \text{ and } \mathcal{H}^{out}_{t} = \verb|Comm|(\hat{x}_{t+1|t+1})
\]

\end{definition}
Definition \ref{defn1} helps define a pretrained LSTM based model which encapsulates the non-linear Kalman Filter based state space as defined in Equations \eqref{eq:KLQG1}-\eqref{eq:KLQG7}. More specifically, the inputs and outputs of $\mathcal{M}$ at time $t$  are represented by $\Delta^{in}_t \text{ and } \Delta^{out}_t$ respectively.

\begin{definition}[zk-SNARK Setup Phase]\label{defn3}
Given a standard security parameter $\lambda$, the zk-SNARK setup phase can be defined by function \verb|Setup| such that
\[
\verb|Setup|(1^\lambda,\mathcal{M},\theta)\mapsto (pk,vk)
\]
\end{definition}

Definition \ref{defn3} discusses the setup phase of the zk-SNARK proof mechanism that involves the generation of $pk$ and $vk$ that represent the proving and verification keys respectively. We assume that the setup phase needs to be executed once for each utility. Such assumptions are driven by the fact that a pretrained, high accuracy state-space model $\mathcal{M}$ does not require frequent updates to the weight parameter. However, we note that the setup function needs to be re-evaluated upon each update of the weight parameter $\theta$. 

\begin{definition}[zk-SNARK Proving Phase]\label{defn4}
The zk-SNARK proving phase is defined by function \verb|Prove| such that
\[
\verb|Prove|(pk,\theta,\mathcal{M},\Delta^{in}_t,\Delta^{out}_t)\mapsto \Pi_t
\]
\end{definition}
Definition \ref{defn4} details the proving phase of the zk-SNARK. The \verb|Prove| function consumes the proving key $pk$, $\mathcal{M}$ parametrized by $\theta$ as well as the inputs $\Delta^{in}_t$ and outputs $\Delta^{out}_t$ at time $t$ to generate a proof artifact $\Pi_t$. The \verb|Prove| function is executed for each timestep $t$ at the utility level.

\begin{definition}[zk-SNARK Verification Phase]\label{defn5}
The verification of zk-SNARK proof artifacts can be represented by the function \verb|Verify| such that
\[
\verb|Verify|(vk,\Pi_t,\Delta^{in}_t,\Delta^{out}_t)\mapsto \Phi_t
\]
\end{definition}
Definition \ref{defn5} presents the \verb|Verify| function that aims to enable regulators to verify reported proofs $\Pi_t$ based on the verification key $vk$ as well as inputs $\Delta^{in}_t$ and $\Delta^{out}_t$. The \verb|Verify| function provides output $\Phi_t \in \{0,1\}$ which is indicative of successful verification. 

zkSTAR relies on several system-level assumptions surrounding the zero-knowledge proving system. First, as a consequence of using the \texttt{ezkl}/Halo2 backend \cite{ezkl2024}, zkSTAR depends on elliptic-curve cryptographic primitives and CRS integrity, assuming the CRS is generated via a publicly verifiable ceremony per-utility. Second, the verifier is assumed honest-but-curious, and public proof artifact values are assumed not to violate utility state-space privacy \cite{doe}. Finally, communication channels are assumed to preserve integrity but not confidentiality, which can be addressed via transport-level encryption such as TLS.


Zero knowledge regulatory compliance for attack detection relies on two operational attributes pertaining to temporal and statistical consistency. We argue that these two attributes are necessary and sufficient conditions to enable zero-knowledge regulatory compliance verification. Our argument is based on the ability of the pretrained model $\mathcal{M}$ to recursively ensure that state space estimates are derived from previous predictions while being statistically consistent with expected residual covariance. In the following subsections, we formally state each operational attribute separately as a precursor for guaranteeing integrity of detection outcomes.


\subsection{zkSTAR Temporal Consistency Guarantees}
Temporal consistency (TC) pertains to the recursive relationship between \textit{a posteriori} state space estimates at $t$ and $t+1$ respectively. Temporal consistency constraint encoded in the non-linear Kalman filter model ensures that the posteriori estimate $\hat{x}_{t+1|t+1}$ is a derivative of $\hat{x}_{t|t}$. We state this formally as part of Definition \ref{defn6}.
\begin{definition}[Temporal Consistency]\label{defn6}
The state space system is considered to be \emph{temporally consistent} if for each consecutive time step pairs \(t,t+1\), the following holds.
\[
x^{out}_{t} = \hat{x}_{t|t-1} + K_tr_t \text{, }  \hat{x}_{t+1|t} = g(x^{in}_{t+1},u_{t}) \text{, }x^{out}_{t} = x^{in}_{t+1}
\]


\end{definition}
As a consequence of Definition \ref{defn6} we can make the temporal consistency claim presented in Theorem \ref{thm:TC}.

\begin{theorem}[zk-STAR TC Conditions]\label{thm:TC}
Temporal consistency can be asserted across every consecutive time step pairs $t,t+1$ given sensor data measurements $y_{t+1}$ if and only if the following conditions hold
\[
\mathcal{H}^{in}_{t+1} = \mathcal{H}^{out}_{t} \text{ and } \Phi_{t+1} = 1  
\]
\end{theorem}
\begin{proof}
Proof provided in Appendix \ref{app:TC_proof}
\end{proof}

Theorem \ref{thm:TC} shows that the integrity of temporal posteriori predictions from an EKF driven state space system can be established using a zkSNARK proof artifact if and only if the hash commitments of the inputs to the EKF model at each time step is equal to that of the previous EKF output; and the proof artifact itself can be verified successfully. In other words, Theorem \ref{thm:TC} demonstrates that if temporally inconsistent values are used for carrying out the EKF predictions, then the verification fails for at least one time step. Therefore, to establish temporal consistency, it is necessary and sufficient to check alignment of commitments and verification status' across successive time steps. 
%

\subsection{zkSTAR Statistical Consistency Guarantees}
We consider the local attack detection framework characterized by the $\chi^2$ hypothesis test presented in Section \ref{subsec:hyp_test}. The $\chi^2$ test relies on the residual and state space estimates provided by $\mathcal{M}$ that is parametrized by $\theta$. As a result of Equation \eqref{eq:alarm_condition}, the local attack detection framework is expected to yield an alarm value denoted by $\rho_t=1$, when an attack is taking place while maintaining $\rho_t=0$ under normal circumstances. In practice however, sensor measurements traversing a utility ICS are subject to calibration errors and finite-precision effects on inexpensive hardware, making numerically exact reproduction at every ICS layer unrealistic. The inherent robustness of hypothesis tests permits approximate agreement, since a measurement that is close but not identical to the originally recorded value is sufficient for SC to hold.

Therefore, we utilize fundamental statistical principles to establish the notion of statistical consistency (SC). For the purposes of SC, we denote the true post-attack alarm status and the actual outcomes of the $\chi^2$ anomaly detector as $\rho^{real}_t \in \{0,1\}$ and $\rho^{rep}_t \in \{0,1\}$ respectively.
\begin{definition}[Statistical Consistency]\label{defn7}
Attack detection is deemed to be \emph{statistically consistent} if $\rho^{real}_t = \rho^{rep}_t$.
\end{definition}
SC thus ensures that the reported alarm values faithfully reflect what a genuine, unmanipulated execution of the $\chi^2$-based detection framework would have deemed as the true outcome. Lemma~\ref{lem:LM1} formalizes SC by characterizing the probabilistic similarity condition on sensor measurements in alignment with our stated threat model.


\begin{lemma}\label{lem:LM1}
Probabilistically similar sensor measurement realizations $y^{real}_t,y^{rep}_t$ are  sufficient conditions for observing statistically consistent outcomes $\rho^{real}_t,\rho^{rep}_t$.
\end{lemma}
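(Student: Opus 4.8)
The plan is to prove the biconditional by unfolding the two definitions it links. First I would note that, by the alarm rule of Equation~\eqref{eq:alarm_condition}, the detection outcome $\rho_t$ is nothing but the indicator of the event that the $\chi^2$ test statistic exceeds its critical value $\chi^2_{p,\alpha}$, and that---because the pretrained model $\mathcal{M}$ and its parameters $\theta$ are fixed---the map carrying a measurement realization through Equations~\eqref{eq:KLQG1}--\eqref{eq:KLQG7} to its test statistic $\ChiTestStat(\cdot;\alpha,m)$ is a deterministic function. Hence $\tilde{\rho}_t$ and $\rho_t$ coincide if and only if $\tilde{T}_t$ and $T_t$ fall on the same side of the threshold, which reduces the whole claim to matching this ``same side of the threshold'' condition against Definition~\ref{defn8}.

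For the sufficiency direction I would assume $\tilde{y}_t, y_t$ are probabilistically similar. By Definition~\ref{defn8} both $\tilde{T}_t$ and $T_t$ exceed $\chi^2_{p,\alpha}$ (and, in the complementary regime, both lie below it), so applying Equation~\eqref{eq:alarm_condition} to each realization yields $\tilde{\rho}_t = \rho_t$, i.e.\ statistical consistency in the sense of Definition~\ref{defn7}. For necessity I would run the same implication backwards: from $\tilde{\rho}_t = \rho_t$, Equation~\eqref{eq:alarm_condition} forces $\mathbf{1}[\tilde{T}_t > \chi^2_{p,\alpha}] = \mathbf{1}[T_t > \chi^2_{p,\alpha}]$, so either both statistics exceed the critical value---precisely the condition in Definition~\ref{defn8}---or neither does, which is its non-alarm analogue; in both cases $\tilde{y}_t$ and $y_t$ are probabilistically similar, closing the equivalence.

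Since the argument is essentially definitional, the main obstacle is bookkeeping rather than mathematics. The delicate step is making the determinism/invariance claim watertight: I need to argue that replacing the field value $y_t$ by the stored or transmitted value $\tilde{y}_t$ perturbs the EKF pipeline only through the residual $r_t$, and therefore only through $\tau_t$ and $\|\tau_t\|_2^2$, so that the alarm depends on the measurement \emph{solely} via $\ChiTestStat$---this is exactly what legitimizes tolerating sensor-calibration and numerical-precision discrepancies. A secondary point is to fix the reading of Definition~\ref{defn8} as ``identical alarm decision,'' covering both the alarm and the non-alarm regimes; without that convention only the sufficiency half in the alarm case is literally covered and the ``necessary'' direction fails to go through. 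I expect no further technical difficulty once these conventions are settled.
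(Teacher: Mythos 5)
Your proof is correct and follows essentially the same route as the paper's: unfold Definition~\ref{defn8} and Equation~\eqref{eq:alarm_condition} to show that probabilistic similarity is equivalent to the two test statistics landing on the same side of $\chi^2_{p,\alpha}$, which is equivalent to $\tilde{\rho}_t=\rho_t$. Your explicit handling of the non-alarm regime and of the determinism of the measurement-to-statistic map is in fact more careful than the paper's own argument, which treats only the case where both statistics exceed the threshold and dismisses the necessity direction as trivial.
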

\begin{proof}
Proof provided in Appendix \ref{subsec:LM1_proof}
\end{proof}
Lemma \ref{lem:LM1} shows that as long as the post-attack sensor signal  is probabilistically similar to the values presented to the anomaly detector, the zkSTAR framework retains statistical consistency. Under our threat model, Lemma \ref{lem:LM1} permits the sensor data presented to zkSTAR and its true post attack counterparts to be non-identical, leading to identical alarm outcomes only under conditions of probabilistic similarities.

\begin{theorem}[zk-SNARK SC Conditions]\label{thm:SC}
Given consecutive time step pairs $t, t+1$ temporal consistency of the system operating on $y^{rep}_{t+1}$ is necessary and sufficient for statistical consistency with respect to $y^{real}_{t+1}$ to hold with probability at least $1 - \epsilon_{max} - \alpha$.
\end{theorem}
\begin{proof}
Proof provided in Appendix \ref{subsec:SC_proof}
\end{proof}
Theorem \ref{thm:SC} shows that statistical consistency is a consequence of temporal consistency claims that helps establish soundness of the alarms themselves. 

Intuitively, temporal consistency ensures the EKF runs on the true posterior state measurements. Assumption~\ref{as:as1} as defined in our threat model guarantees that the true post-attack measurement is probabilistically faithful to what the detector would have seen; and Lemma~\ref{lem:LM1} then establishes that the reported alarm must match the true outcome. Using the statistical and temporal guarantees, we characterize the ability of the zkSTAR framework to prevent suppression of detection outcomes in the subsequent section.

\subsection{Detection Suppression Guarantees}
We prove that zkSTAR achieves its security objectives by guaranteeing that any attempt to forge or reuse stale state estimates for regulatory verification violates the TC-SC consistency relation, thereby ensuring attack-suppression detection. For a given timestep $t$, we denote $\Omega_t$ to be the witness for corresponding the state space model $\mathcal{M}$ parametrized by $\theta$. Using Theorems \ref{thm:TC} and \ref{thm:SC}, the temporal-statistical consistency relation $\mathcal{R}$ can be defined as in Equation \eqref{eq:Rt}.
\begin{equation}\label{eq:Rt}
\begin{aligned}
\mathcal{R}\Big[(\Delta_t^{in},& \Delta_t^{out}), \Omega_t\Big]
= 
\Big[
(\mathcal{M}(\Delta_t^{in}; \theta) = \Delta_t^{out})\\
&\wedge(\mathcal{H}^{in}_{t+1} = \mathcal{H}^{out}_t)
\wedge
(T_{\chi^2,t}(\tau_t) \le \chi^2_{m,\alpha})
\Big] \in \{0,1\}.
\end{aligned}
\end{equation}
We consider a utility prover $\mathcal{U}$ undergoing a data-driven cyberattack starting at timestep $t^{attack}>>t^{init}$ where $t^{init}$ represents the initial time step pertaining to the setup phase (Definition \ref{defn1}). Concretely, we base our analysis on the Halo2 backend for establishing knowledge-soundness. We assume that the objective of the malicious attacker is to carry out the attack while suppressing attack detection. Formally, we define detection suppression using Definition \ref{def:suppression}.

\begin{definition}[Detection Suppression]\label{def:suppression}
Detection using $\mathcal{M}(.;\theta)$ is deemed to have been suppressed if for any $t > t^{attack}$ the following conditions hold
\[
\begin{aligned}
\verb|Verify|(vk, \Pi^{rep}_t, \Delta_t^{in}, \Delta_t^{out}) = \Phi^{rep}_t = 1\\
\mathcal{R}\Big[(\Delta_t^{in}, \Delta_t^{out}), \Omega^{rep}_t\Big] = \Psi^{rep}_t =  1
\end{aligned}
\]
\end{definition}
\noindent 
As a result using the TC and SC claims detailed in Theorems \ref{thm:TC} and \ref{thm:SC}, we characterize the infeasibility of detection suppression in Theorem \ref{thm:dsks}.

\begin{theorem}[Detection Suppression Bounds]\label{thm:dsks}
Given a probabilistic polynomial-time adversary $\mathcal{A}$ and an underlying proving system comprising knowledge-soundness, the following relation holds for any timestep $t > t^{attack}$
\[
\begin{aligned}
    \Pr\Big[\Psi^{rep}_t = 1|\Phi^{rep}_t=1\Big] \leq \epsilon_{max} +\alpha + \texttt{negl}(\lambda)
\end{aligned}
\]
\end{theorem}
\begin{proof}
Proof provided in Appendix \ref{subsec:proof_dsks}.
\end{proof}

Theorem \ref{thm:dsks} fundamentally demonstrates that a utility undergoing an attack must either fail proof verification or violate TC-SC relations enforced by zkSTAR. In other words, an under-attack utility cannot fabricate or reuse stale system states without detection, since any such attempt would contradict the knowledge-soundness of the Halo2 backend rendering attack suppression cryptographically infeasible. 


\section{Computational Architecture of zkSTAR}

zkSTAR comprises two components corresponding to the utility as prover, encapsulating the EKF model in a zero-knowledge framework, and the regulator as verifier, asserting TC, SC, and detection suppression claims. Proofs are generated in an \emph{on-demand, lazy fashion}, wherein utilities commit to the EKF model a priori and generate witnesses in real time, with regulators invoking proof generation for specific detection windows as needed.

Encapsulating the EKF model into a zkSNARK structure presents three challenges. First, asserting TC is challenging due to the highly non-linear and dynamic nature of utility based ICS subsystems. Second, ensuring statistically consistent detection outcomes without revealing parameters of the statistical test is difficult. Lastly, high-fidelity streaming sensor datasets possess a fine temporal granularity that leads to an additional layer of computational complexity when characterizing attacks. In order to address these challenges, the zkSTAR framework adopts a temporal decomposition scheme that specifically caters to TC and SC aspects. To address these, zkSTAR adopts a temporal decomposition scheme tailored to TC and SC aspects.

\subsection{Temporal Decomposition Scheme}\label{subsec:tds}
We consider a real-time attack detection horizon that is discretized into a sequence of detection windows denoted by set $\mathcal{W}_i, \forall i\in\{1,2,\ldots\}$. Each detection window $\mathcal{W}_i$ is in turn divided into a set of $W$ TC intervals. Each TC interval is represented by $w_{ij},\forall j\in\{1,W\}$ and is primarily geared towards enforcing TC across multiple prediction time steps. Every TC interval consists of $D$ consecutive timestamps such that $w_{ij}=\{t_{w_{ij}},t_{w_{ij}}+1, t_{w_{ij}}+2\ldots t_{w_{ij}}+D\}$, where $t_{w_{ij}}$ is the starting timestamp of TC interval $w_{ij}$. The relationship between each detection window $\mathcal{W}_i$, its constituent TC intervals $w_{ij}$ and corresponding timesteps can be compactly denoted using Equations \eqref{eq:detwin} and \eqref{eq:tcint}.
\begin{gather}
    \mathcal{W}_i = \{w_{i1},w_{i2},\ldots w_{iW}\}, \ \forall i\in 1,2,\ldots \label{eq:detwin}\\
    w_{ij}=\{t_{w_{ij}},t_{w_{ij}}+1, \ldots t_{w_{ij}}+D\}, \ \forall j\in \{1,W\} \label{eq:tcint}
\end{gather}
Without loss of generality, let us consider a detection window $\mathcal{W}$ and its constituent TC intervals $w_j \in \mathcal{W},\ j\in \{1,W\}$. We know from Section \ref{sec:ssm} that for any time step $t_w+k\in w$, $r_{t_w+k}\sim N(0,S_{t_w+k})$ where $r_{t_w+k},S_{t_w+k}$ are the residual and the covariance matrix predicted by the pre-trained EKF model for time step $t_{w}+k$. Since $r_{t_w+k} \sim N(0,S_{t_w+k})$, the residuals and covariance matrices can be aggregated across all TC intervals in a detection window to yield $r_W$ and $S_W$ denoted in Equation \eqref{eq:agg}.
\begin{gather}
r_W = \sum\limits_{j=1}^{W}\sum\limits_{k=1}^{D}r_{t_{w_j}+k} \ \text{,} \  S_W = \sum\limits_{j=1}^{W}\sum\limits_{k=1}^{D}S_{t_{w_j}+k}\label{eq:agg}
\end{gather}
As a result of hierarchical decomposition, there exists a distinct detection alarm $\rho_W$ for each detection window $\mathcal{W}$.

The temporal decomposition scheme paves the way for zkSNARKs to help enforce integrity of state space predictions based on a pre-committed EKF model. However, the zkSNARK circuit size grows exponentially with the growth in the detection window size making the proof generation step computationally expensive especially for larger systems. Therefore, in order to balance practicality and computational efficiency, we adopt a kernel based strategy that decomposes various compute tasks pertaining to an EKF model. Specifically, we design zkSNARK based kernels that target individual sub tasks pertaining to TC and SC attributes of EKF predictions and detection outcomes capable of operating on batched EKF residual outputs. We note that current zkSNARK frameworks such as \texttt{ezkl} lack support for linear solvers, SVD decompositions, and derivative evaluations of complex sub modules due to their high computational cost in zero knowledge \cite{ezkl2024}. To address these limitations, we employ a \emph{compute-offline–validate-online} strategy, performing such expensive operations off-circuit while verifying their consistency with the state-space model within the zkSNARK circuit.

\begin{figure*}[!htb]
    \centering
    \includegraphics[width=0.95\textwidth,trim={0 0 4cm 0},clip,keepaspectratio]{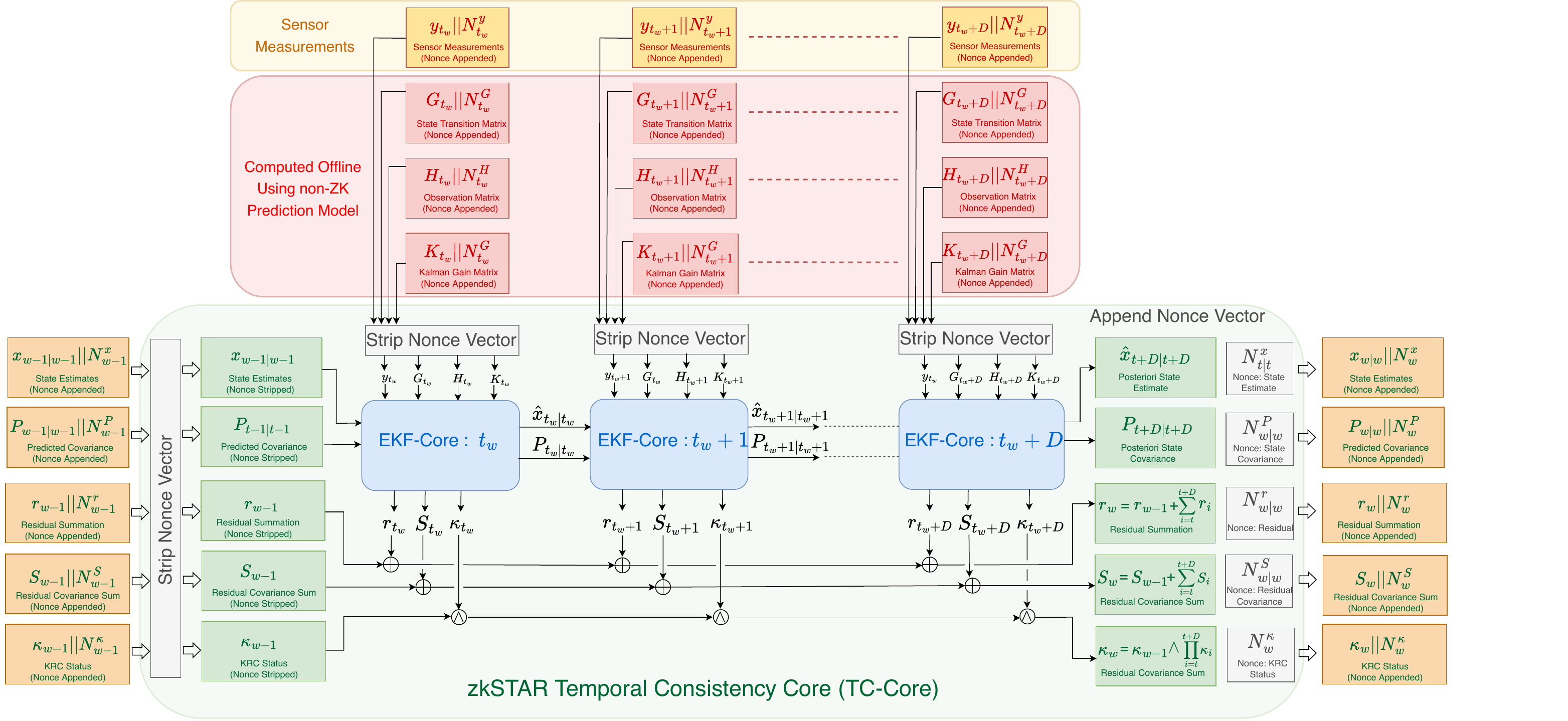}
  \caption{Temporal consistency handling in zkSTAR}\label{fig:temp_consist_zkSTAR}
\end{figure*}
\subsection{zkSNARK based Temporal Kernels}
For ensuring temporally consistent outcomes, our kernel design strategy involves breaking down distinct components of the EKF model inference. Specifically, we decompose the EKF inference over a single TC interval into distinct phases that pertain to validating the consistency of the Kalman gain matrix with the state covariance and process and measurement Jacobians; inference of the pre-committed EKF model on one timestamp of the TC interval; and accumulating the residual and its covariance sum across the entire TC interval.

\noindent
\textbf{Kalman Reconstruction Check (KRC) Kernel}: The EKF model presented in Section \ref{sec:ssm} relies on successive linear solves of Equation \eqref{eq:KLQG5} in order to determine the Kalman gain $K_t$. However, this poses a major challenge since implementing linear system solvers within a zkSNARK is computationally very expensive. Therefore, we develop a computational kernel tailored for the zkSNARK paradigm that demonstrates that offline computed matrices Kalman gain $K$, LSTM Jacobian $H$, residual covariance $S$ and state covariance $P$ are consistent with respect to the state space EKF model. The result is the KRC kernel depicted in Algorithm \ref{alg:kc_kernel} which yields a boolean output $\kappa_t$ depending on whether the error in Equation \eqref{eq:KLQG5} is below or above $\epsilon_{KC}$
\begin{algorithm}[htbp]
\caption{zkSTAR Kernel for Kalman Gain Reconstruction}
\label{alg:kc_kernel}
\begin{algorithmic}[1]
\Function{KRC-Kernel}{$K$, $S$, $H$, $P$, $\epsilon$}
    \State $\kappa \gets 0$
    \If{$\| K S - P H^\top \|_2^2 < \epsilon$} $\kappa \gets 1$
    \EndIf
    \State \Return $\kappa$
\EndFunction
\end{algorithmic}
\end{algorithm}

\noindent
\textbf{Extended Kalman Filter (EKF) Kernel}: For enforcing the state space dynamics at each timestamp in zero knowledge, we develop a dedicated zkSNARK based computational kernel for EKF denoted by $\mathcal{M}^{EKF}_t$.
\begin{gather}\label{eq:ekf_map}
   \mathcal{M}^{EKF} :
\underbrace{\mathcal{I}^{EKF}_{t} = \{\mathcal{X}_{t-1}, y_t, \mathcal{J}_{t}\}}_{\text{EKF input at timestamp $t$}}
\;\mapsto\;
\underbrace{\mathcal{O}^{EKF}_{t} = \{r_t,S_t,\kappa_t\}}_{\text{EKF output at timestamp $t$}} 
\end{gather}
Fundamentally, $\mathcal{M}^{EKF}$ forms a mapping between input tuple $\mathcal{I}^{EKF}_{t}$ and output tuple $\mathcal{O}^{EKF}_{t}$. The input tuple in turn consists of
$\mathcal{X}_{t-1} =\{x_{t-1|t-1},P_{t-1|t-1}\}$ which denotes a collection of state and covariance estimates from the prior time step. $\mathcal{J}_{t}=\{G_{t},H_{t},K_{t}\}$ represents a collection of the Jacobian matrices of the LSTM and feed-forward neural layers in the EKF formulation along with the Kalman gain matrix that are computed offline.  Lastly, $y_t$ represents the sensor data sampled at time stamp $t$. The output tuple $\mathcal{O}^{EKF}_{t}$ consists of $r_{t}$,$S_{t}$,$\kappa_{t}$, which denote the residual and its covariance as well as the status of the Kalman Gain reconstruction check. 
\begin{algorithm}[htbp]
 \caption{zkSTAR Kernel for Extended Kalman Filter}\label{alg:EKF_kernel}   
    \begin{algorithmic}[1]
        \Function{EKF-Kernel}{$x_{t-1|t-1}$,$P_{t-1|t-1}$,$y_{t}$,$G_{t}$,$H_{t}$,$K_{t}$}
            \State compute $\hat{x}_{t|t-1}$ using Equation \eqref{eq:KLQG1}.
            \State compute $r_{t}$ using Equation \eqref{eq:KLQG2}
            \State compute priori covariance $P_{t|t-1}$ using Equation \eqref{eq:KLQG3}
            \State compute innovation $S_{t}$ using Equation \eqref{eq:KLQG4}.
            \State $\kappa_{t_w}\leftarrow$\Call{KRCKernel}{$K_{t}$,$S_{t}$,$H_{t}$,$P_{t|t-1}$,$\epsilon_{KC}$}
            \State update posteriori estimate $\hat{x}_{t|t}$ using Equation \eqref{eq:KLQG6}
            \State update the state posteriori covariance $P_{t|t}$ using \eqref{eq:KLQG7}
            \Return $r_{t}$,$S_{t}$,$\kappa_{t}$
        \EndFunction
    \end{algorithmic}
\end{algorithm}
Algorithm \ref{alg:EKF_kernel} represents the computational kernel for EKF within the zkSTAR framework denoted by $\texttt{EKF-Kernel}$. For a timestamp $t$, $\texttt{EKF-Kernel}$ computes the priori state and covariance estimates $\hat{x}_{t|t-1}, P_{t|t-1}$ as well as the resulting residuals $r_t$ using Equations \eqref{eq:KLQG1}, \eqref{eq:KLQG3} and \eqref{eq:KLQG2} respectively. Using Equation \eqref{eq:KLQG4}, the kernel estimates the innovation or the residual covariance matrix $S_{t}$. Next, the $\texttt{KRC-Kernel}$ is invoked to validate the consistency of the Kalman gain matrix $K_t$ with respect to the Jacobian $H_t$, the residual covariance $S_t$, as well as the posteriori covariance $P_{t|t}$. Subsequently, the posteriori state and covariance estimates are updated according to Equations \eqref{eq:KLQG6} and \eqref{eq:KLQG7}. The kernel returns the current residual $r_t$, the residual covariance $S_t$ and the Kalman reconstruction status $\kappa_t$. 

\noindent
\textbf{Temporal Consistency (TC) Kernel}: While the EKF kernel can enforce state space dynamics for each timestamp $t$, it cannot independently ensure the integrity of state space estimates across multiple time stamps. In order to do so, we need a mechanism to assert the integrity between distinct timestamps. Therefore, we develop a temporal consistency (TC) kernel that is geared towards ensuring the integrity across distinct timestamps. Using the temporal decomposition scheme highlighted in Section \ref{subsec:tds} we focus on a single TC interval denoted by $w$ to aid our discussion. We denote the inputs to the TC kernel in terms of three sets: $\mathcal{X}^{TC}_{w-1}, \mathcal{R}^{TC}_{w-1}$ and $\mathcal{K}^{TC}_{w-1}$ which represent state, residual and KRC status estimates from the previous TC interval $w-1$. 
\begin{gather}
    \mathcal{X}^{TC}_{w-1} = \underbrace{\{(x_{w-1|w-1}||N^x_{w-1}), (P_{w-1|w-1}||N^P_{w-1})\}}_{\text{nonce appended state estimates}}\label{eq:xtc_wm1}\\
    \mathcal{R}^{TC}_{w-1} = \underbrace{(r_{w-1}||N^r_{w-1}), (S_{w-1}||N^S_{w-1})}_{\text{nonce appended residual and covariance}}\label{eq:rtc_wm1}\\
    \mathcal{K}^{TC}_{w-1} = \underbrace{(\kappa_{w-1}||N^{\kappa}_{w-1})}_{\text{nonce appended KRC status}}\label{eq:ktc_wm1}
\end{gather}
Each set in turn consists of the raw values that are appended with a unique randomly chosen nonce value to provide a robust degree of entropy to the resulting commitment strings. As a result, $\mathcal{X}^{TC}_{w-1}$ consists of $(x_{w-1|w-1}||N^x_{w-1}), (P_{w-1|w-1}||N^P_{w-1})$ where $x_{w-1|w-1}$, $P_{w-1|w-1}$, $N^x_{w-1}$ and $N^P_{w-1}$ denote the posteriori state and covariance estimate and their corresponding nonce values from the previous TC interval. Similarly, $\mathcal{R}^{TC}_{w-1}$ comprises the residual and covariance pairs $(r{w-1}||N^r_{w-1})$ and $(S_{w-1}||N^S_{w-1})$, where $r_{w-1}$ and $S_{w-1}$ represent the innovation and its associated covariance from the previous window, and $N^r_{w-1}$ and $N^S_{w-1}$ are the corresponding nonce values ensuring stochastic uniqueness of their hash commitments. Finally, $\mathcal{K}^{TC}_{w-1}$ encapsulates $(\kappa_{w-1}||N^{\kappa}_{w-1})$, where $\kappa_{w-1}$ denotes the binary KRC indicator and $N^{\kappa}_{w-1}$ is its appended nonce value used to prevent deterministic linkage across temporal proofs.
\begin{gather}
        \mathcal{Z}^{TC}_{t} = \{\underbrace{(y_t||N^y_t),(G_{t}||N^G_t),(H_{t}||N^H_t),(K_{t}||N^K_t)}_{\text{nonce appended sensor data, Jacobian and Kalman Gain}}\}\\
        \mathcal{Z}^{TC}_{w} = \{{\underbrace{{\mathcal{Z}^{TC}_{t_w},\mathcal{Z}^{TC}_{t_w+1},\ldots \mathcal{Z}^{TC}_{t_w+D}}}_{\text{EKF inputs for TC interval w}}}\}\label{eq:ztc_w}
\end{gather}
Additionally, the TC kernel also utilizes the set $\mathcal{Z}^{TC}_{w}$ which encapsulates the offline computed inputs and sensor data denoted by $\mathcal{Z}^{TC}_{t}$ meant for the EKF Kernel invocations at each timestep of the TC interval. The set $\mathcal{Z}^{TC}_{t}$ contains the nonce-appended sensor measurements and model parameters $(y_t||N^y_t)$, $(G_t||N^G_t)$, $(H_t||N^H_t)$, and $(K_t||N^K_t)$ corresponding to the sensor data, the Jacobians, and Kalman gain, respectively. Each nonce value used in $\mathcal{Z}^{TC}_{t}$ is generated through a cryptographically secure pseudorandom number generator enabling state linkability across successive EKF invocations. Collectively, the windowed set $\mathcal{Z}^{TC}_{w}$ aggregates these per-timestep inputs over the duration of the TC interval $w$, thereby serving as the complete batch of EKF inputs required for evaluating the zkSTAR TC kernel and generating the proof of temporal consistency for that interval. Consequently, Equation \eqref{eq:tc_map} is represents the TC kernel as a mapping denoted by $\mathcal{M}^{TC}$. 
\begin{equation}\label{eq:tc_map}
\begin{aligned}
\mathcal{M}^{TC} :
\;\underbrace{\mathcal{I}^{TC}_{t_w}
 = \{[\mathcal{X},\mathcal{R},\mathcal{K}]_{w-1},
    [\mathcal{Z},\mathcal{N}]_{w}\}}
    _{\text{\small input for TC interval $w$}}
\;\mapsto\;\\[3pt]
\qquad \underbrace{\mathcal{O}^{TC}_{t_w}
 = \{[\mathcal{X},\mathcal{R},\mathcal{K}]_{w}\}}
    _{\text{\small output of TC interval $w$}}.
\end{aligned}
\end{equation}
\begin{algorithm}[htbp]
 \caption{zkSTAR Temporal Consistency (TC) Kernel}\label{alg:tc_kernel}
    
    \begin{algorithmic}[1]
    \Function{TC-Kernel}{$\mathcal{X}^{TC}_{w-1}$,$\mathcal{R}^{TC}_{w-1}$,$\mathcal{Z}^{TC}_{w}$,$\mathcal{N}^{TC}_w$}
        \State {\text{\texttt{\#Strip nonce vectors from inputs}}}
        \State $\{x_{t_w-1|t-1},P_{t_w-1|t_w-1}\}\leftarrow$\Call{StripNonce}{$\mathcal{X}^{TC}_{w-1}$}
        \State $\{r_{w},S_{w},\kappa_{w}\}\leftarrow$\Call{StripNonce}{$\mathcal{R}^{TC}_{w-1}$} 
        \State $\{y_{t},G_{t},H_{t},K_{t}\}^{t_w+D}_{t_w}\leftarrow$\Call{StripNonce}{$\mathcal{Z}^{TC}_{t}$}
        \For{$t\in \{t_{w},t_{w}+1, t_{w}+2 \ldots t_{w}+D\}$}
            \State $\mathcal{X}_{t-1}\leftarrow\{x_{t-1|t-1},P_{t-1|t-1}\}$
            \State $\mathcal{J}_{t}\leftarrow\{G_{t},H_{t},K_{t}\}$
            \State $\mathcal{O}^{EKF}_{t}\leftarrow$\Call{EKF-Kernel}{$\mathcal{X}_{t-1}, y_t, \mathcal{J}_{t}$}
            \State accumulate residual sum $r_w\leftarrow r_w+r_t$
            \State accumulate residual covariance sum $S_w\leftarrow S_w+S_t$
            \State accumulate KRC status $\kappa_w\leftarrow \kappa_w\circ\kappa_t$
        \EndFor
        \State {\text{\texttt{\#Append nonce strings from $\mathcal{N}^{TC}_w$}}}
        \State $\mathcal{X}^{TC}_{w}\leftarrow${$\{(x_{t_w+D|t+D}||N^x_{w}),(P_{t_w+D|t_w+D}||N^P_{w})\}$}
        \State $\mathcal{R}^{TC}_{w}\leftarrow${$\{(r_{w}||N^r_{w}),(S_{w}||N^S_{w})\}$}
        \State $\mathcal{K}^{TC}_{w}\leftarrow${$(\kappa_{w}||N^\kappa_{w})$}
        \State \Return $\mathcal{X}^{TC}_{w},\mathcal{R}^{TC}_{w},\mathcal{K}^{TC}_{w}$
    \EndFunction
    \end{algorithmic}
\end{algorithm}
Algorithmically, the kernel is represented using the $\texttt{TC-Kernel}$ function presented in Algorithm \ref{alg:tc_kernel}. The kernel consumes the outputs of the previous TC interval $w-1$ and iterates through the timestamps $\{t_w\ldots t_{w}+D\}$ belonging to the current TC interval $w$. For each timestamp, the kernel strips the nonce vectors to obtain the underlying values which are fed to $\texttt{EKF-Kernel}$. The residual, covariance and KRC status outputs of the EKF kernel are accumulated across all the timestamps of the current TC interval $w$ through summation and multiplication operations respectively. At each iteration, the kernel returns the overall outputs pertaining to the state, residual and reconstruction status check for the corresponding TC interval.
\begin{figure}[!htb]
    \centering
\includegraphics[width=0.49\textwidth,keepaspectratio]{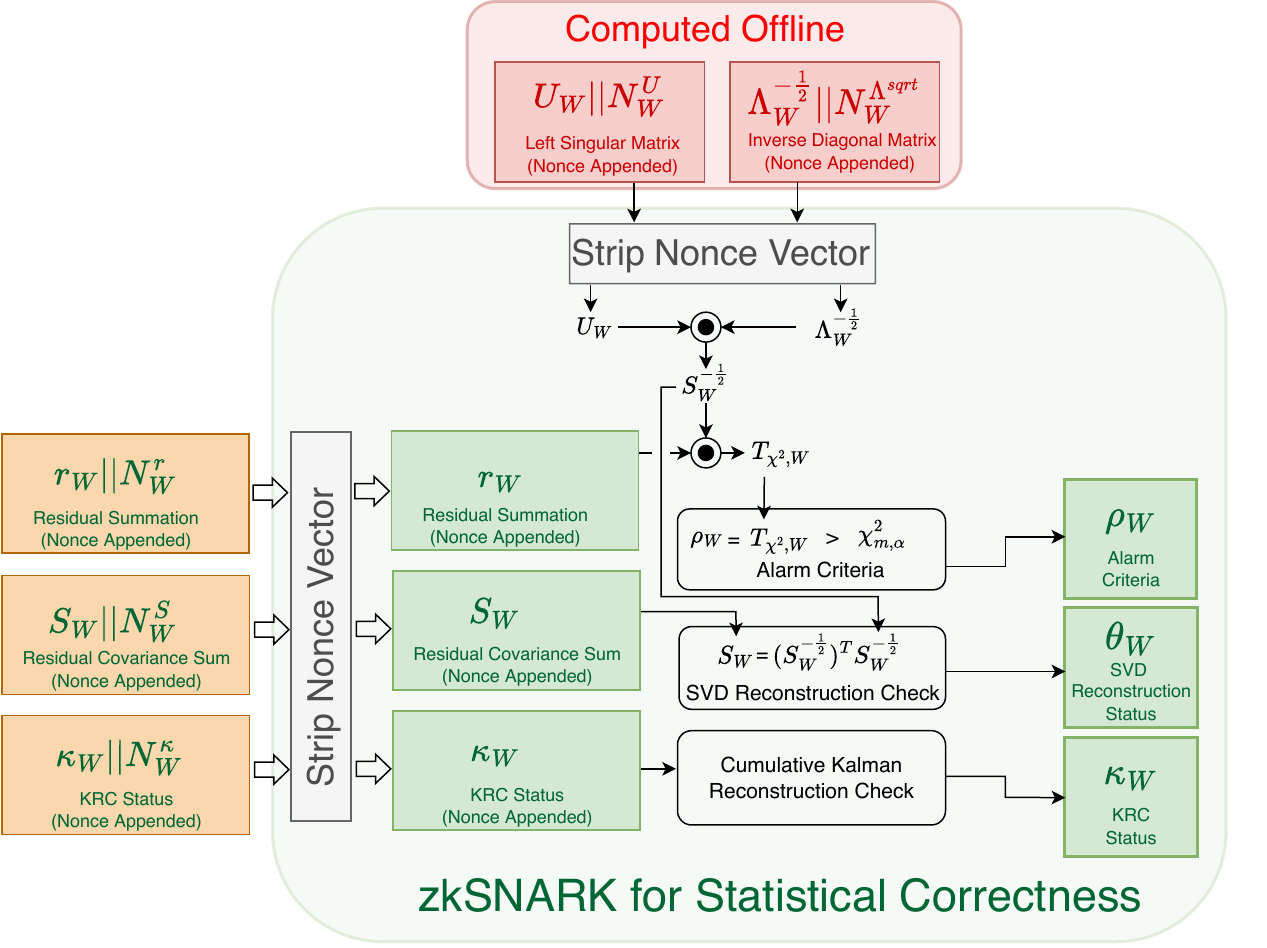}
  \caption{Statistical consistency estimation in zkSTAR}\label{fig:stat_correct_zkSTAR}
\end{figure}
\begin{algorithm}[htbp]
\caption{zkSTAR Kernel for SVD Reconstruction}
\label{alg:svd_kernel}
\begin{algorithmic}[1]
\Function{SVD-Kernel}{$U$, $\Sigma^{-\frac{1}{2}}$,$S$, $\epsilon$}
    \State $\eta \gets 0$
    \If{$\|(U\Sigma^{-\frac{1}{2}})(U\Sigma^{-\frac{1}{2}})^TS - I\|_2^2 < \epsilon$} $\eta \gets 1$
    \EndIf
    \State \Return $\eta$
\EndFunction
\end{algorithmic}
\end{algorithm}
\subsection{zkSNARK based Statistical Kernels}
Applying the $\chi^2$ test on the accumulated residuals and covariance requires us to carry out SVD decomposition to evaluate the test statistic and the alarm criteria in complete zero-knowledge. Therefore, we design a sequence of statistical zkSNARK kernels to help achieve efficiency in zkSTAR.

\noindent
\textbf{SVD Reconstruction Kernel}: We develop a kernel to validate the SVD decomposition of the accumulated covariance matrix in zero-knowledge. The SVD step is essential to compute the normalized PC scores for determining the value of the test statistic in conjunction with the accumulated residual values for each detection window. The SVD validation kernel \texttt{SVD-Kernel} is presented in Algorithm \ref{alg:svd_kernel}.   

Specifically the SVD kernel consumes the offline computed value of the left singular matrix $U$ as well as the square root of the singular values $\Sigma^{-\frac{1}{2}}$, an error tolerance threshold $\epsilon$ in addition to the accumulated covariance matrix $S$ itself. The kernel returns the validation status $\eta$ depending on whether the SVD reconstruction represented by $U(\Sigma^{-\frac{1}{2}})^T(U\Sigma^{-\frac{1}{2}})$ falls within an $\epsilon$ ball of the predicted covariance matrix accumulated across the detection window.

\noindent
\textbf{Hypothesis Test Kernel}: We conduct the $\chi^2$ hypothesis test using the \texttt{SC-Kernel} which considers the accumulated residual and covariance values at the end of every detection window. Additionally, we compute the SVD of the accumulated covariance matrix in an offline manner and validate its consistency using the SVD reconstruction kernel denoted by \texttt{SVD-Kernel}. We represent the inputs to the SVD kernel in terms of $\mathcal{Z}^{SVD}_{W}$ as stated in Equation \eqref{eq:svd_inputs} which consists of the singular vectors and the square roots of the singular values.
\begin{gather}\label{eq:svd_inputs}
        \mathcal{Z}^{SVD}_{W} = \{\underbrace{(U_W||N^U_W),(\Sigma^{-\frac{1}{2}}_{W}||N^\Sigma_W)}_{\text{nonce appended SVD values}}\}
\end{gather}
The hypothesis testing kernel pertaining to detection window W can therefore be denoted as a mapping $\mathcal{M}^{SC}_W$ as stated in Equation \eqref{eq:ht_map} which maps input set $\mathcal{I}^{SC}_{W}$ to output $\mathcal{O}^{SC}_{t_w}$.
\begin{gather}\label{eq:ht_map}
\mathcal{M}^{SC} :
\underbrace{\mathcal{I}^{SC}_{W} = \{\mathcal{R}_W, \mathcal{Z}^{SVD}_{W}\}}_{\text{input for HT}}\mapsto\underbrace{\mathcal{O}^{SC}_{t_w} = \{\rho_W, \eta_W, \kappa_W\}}_{\text{output of HT}}
\end{gather}
In Equation \eqref{eq:ht_map}, $\mathcal{R}_W$ represents outputs of \texttt{TC-Kernel} obtained upon the culmination of the current detection window while $\mathcal{Z}^{SVD}_W$ represents the offline computed SVD of the accumulated residual covariance matrix. 
\begin{algorithm}[htbp]
\caption{zkSTAR Kernel for the $\chi^2$ Hypothesis Test}
\label{alg:ht_kernel}
\begin{algorithmic}[1]
\Function{SC-Kernel}{$\mathcal{R}_W$,$\mathcal{Z}^{SVD}_W$}
    \State {\text{\texttt{\#Strip nonce vectors from inputs}}}
    \State $\{r_W,S_W,\kappa_W\}\leftarrow$\Call{StripNonce}{$\mathcal{R}_{W}$}
    \State $\{U_W,\Sigma^{-\frac{1}{2}}_W\}\leftarrow$\Call{StripNonce}{$\mathcal{Z}^{SVD}_{W}$}
    \State {\text{\texttt{\#Set UCL and default alarm for W}}}
    \State $T^{UCL}_W\leftarrow \chi^2_{m,\alpha}$, $\rho_W\leftarrow0$
    \State $\eta_W\leftarrow$\Call{SVD-Kernel}{$U_W$,$\Sigma^{-\frac{1}{2}}_W$,$S_W$, $\epsilon_{SVD}$}
    \State compute $T_{\chi^2,W}\leftarrow ||U_W\Sigma^{-\frac{1}{2}}_Wr_W||^2_2$ 
    \If{$ T_{\chi^2,W} > \chi^2_{m,\alpha}$ then $\rho_W \gets 1$}
    \EndIf
    \State \Return $\rho_W$, $\eta_W$, $\kappa_W$ 
\EndFunction
\end{algorithmic}
\end{algorithm}

In \texttt{SC-Kernel} represented in Algorithm \ref{alg:ht_kernel}, we consume $\mathcal{R}_W$ and $\mathcal{Z}^{SVD}_W$ and initiate the hypothesis testing procedure by stripping the nonce vectors from the inputs. We compute the test statistic $T_{\chi^2,W}$ by utilizing the offline computed SVD values. Additionally, the value of the upper control limit $T^{UCL}_{W}$ is baked into the proving circuit for individual subsystems since it is assumed to remain invariant across time, while the alarm value $\rho_W$ is set to 0 by default. We compare the test statistic with $T^{UCL}$ to determine the value of $\rho_W$ within the kernel. Lastly, the SVD kernel is invoked to ensure the consistency of the offline computed SVD values with the accumulated sensor covariance matrix $S_W$ with a threshold of $\epsilon_{SVD}$ to yield a consistency status denoted by $\eta_W$. The kernel outputs a tuple of values consisting of the alarm $\rho_W$, the SVD status check $\eta_W$ as well as the cumulative KCR status check of $\kappa_W$. 
\begin{algorithm}[!htb]
\caption{zkSTAR Algorithmic Framework}
\label{alg:global_algorithm}
\begin{algorithmic}[1]
\State {\text{\texttt{\#Set visibility for TC circuit}}}
\State {\text{\texttt{\#input;output;param visibility}}}
\State \texttt{hash;hash;private}$\leftarrow$\Call{ProvingMode}{$\mathcal{M}^{TC}$}
\State
\State {\text{\texttt{\#Setup proving circuit for TC}}}
\State \verb|Setup|($1^\lambda,\mathcal{M}^{TC},\theta^{TC})\mapsto (pk^{TC},vk^{TC}$)
\State
\State {\text{\texttt{\#Set visibility for SC circuit}}}
\State {\text{\texttt{\#input;output;param visibility}}}
\State \texttt{hash;public;private}$\leftarrow$\Call{ProvingMode}{$\mathcal{M}^{SC}$}
\State
\State {\text{\texttt{\#Setup proving circuit for SC}}}
\State \verb|Setup|($1^\lambda,\mathcal{M}^{SC},\theta^{TC})\mapsto (pk^{SC},vk^{SC}$)
\State
\State distribute $vk^{TC},vk^{SC}$ publicly
\State
\For{$i = 1,2\ldots$} \Comment{rolling detection windows $\mathcal{W}_i$}
    \For{$j=1,2,\ldots W$} \Comment{TC interval $w_{ij}$}
        \State generate nonce values $\mathcal{N}^{TC}_{j}$
        \State compute $\mathcal{Z}^{TC}_{j}$ using Equation \eqref{eq:ztc_w}
        \State compute $[\mathcal{X}, \mathcal{R}, \mathcal{K}]_{j-1}$ using Equations \eqref{eq:xtc_wm1}-\eqref{eq:ktc_wm1}.
        \State $\mathcal{I}^{TC}_{j}\leftarrow\{[\mathcal{X}, \mathcal{R}, \mathcal{K}]_{j-1},[\mathcal{Z}^{TC},\mathcal{N}^{TC}]_{j}\}$
        \State $\mathcal{O}^{TC}_{j}\leftarrow$\Call{TC-Kernel}{$\mathcal{I}^{TC}_{j}$}
        \State
        \State {\text{\texttt{\#Generate TC proof for interval $j$}}}
        \State $\Pi^{TC}_t \leftarrow\verb|Prove|(pk^{TC},\theta^{TC},\mathcal{M}^{TC},\mathcal{I}^{TC}_{j},\mathcal{O}^{TC}_{j})$
        \State
        \State extract $[\mathcal{X}, \mathcal{R}, \mathcal{K}]_{j}$ from $\mathcal{O}^{TC}_{j}$
    \EndFor
    \State extract $S_W$ from $\mathcal{R}_W$
    \State $U_W,\Sigma_W^{-\frac{1}{2}}\leftarrow$\Call{SVD}{$S_W$}\Comment{computed offline}
    \State generate nonce values $\mathcal{N}^{SVD}$
    \State compute $\mathcal{Z}^{SVD}_{W}$ using SVD values $U_W,\Sigma_W^{-\frac{1}{2}},\mathcal{N}^{SVD}$
    \State $[\rho, \eta, \kappa]_W \leftarrow$\Call{SC-Kernel}{$\mathcal{R}_W$,$\mathcal{Z}^{SVD}_W$} 
    \State
    \State {\text{\texttt{\#Generate SC proof for window $\mathcal{W}_i$}}}
    \State $\Pi^{SC}_t \leftarrow\verb|Prove|(pk^{SC},\theta^{SC},\mathcal{M}^{SC},\mathcal{I}^{SC}_{W},\mathcal{O}^{SC}_{W})$
    \State
    \State \Return $\mathcal{P_{W_i}} = \Big[\Big( \Pi^{TC}_{w_{i1}}\ldots\Pi^{TC}_{w_{iW}} \Big),\Pi^{SC}_{\mathcal{W_i}}\Big]$
\EndFor
\end{algorithmic}
\end{algorithm}
\subsection{zkSTAR Algorithmic Framework}
We describe the overall zkSTAR algorithmic framework in Algorithm \ref{alg:global_algorithm} that encapsulates the TC and SC kernels. The framework primarily executes on a rolling horizon of detection windows characterized by $\mathcal{W}_i$ with each window consisting of $W$ TC intervals of $D$ timestamps each. As a result, zkSTAR produces $W$ TC and 1 SC proof artifact for each detection window $\mathcal{W}_i$ which is denoted by the set $\mathcal{P_{W}}_i$ described in Equation \eqref{eq:proof_artifacts}.
\begin{gather}\label{eq:proof_artifacts}
    \mathcal{P_{W}}_i = \Big[\Big( \Pi^{TC}_{w_{i1}}\ldots\Pi^{TC}_{w_{iW}} \Big),\Pi^{SC}_{\mathcal{W_i}}\Big]
\end{gather}

zkSTAR performs a one-time setup for the temporal (TC) and statistical (SC) zkSNARK circuits. The TC circuit uses a \emph{hash, hash, private} visibility mode, ensuring inputs and outputs are committed as hashes $\verb|hash|(\mathcal{I}^{TC}_w)$, $\verb|hash|(\mathcal{O}^{TC}_w)$ while keeping model parameters private, yielding keys $(pk^{TC}, vk^{TC})$. The SC circuit adopts \emph{hash, public, private} visibility, committing inputs while exposing outputs $\mathcal{O}^{SC}_\mathcal{W}$ publicly, yielding $(pk^{SC}, vk^{SC})$. Both verification keys are published for regulator use. At runtime, zkSTAR iterates over detection windows and TC intervals per Algorithm~\ref{alg:global_algorithm}, invoking the TC-Kernel and SC-Kernel to produce verifiable zero-knowledge proofs of temporal and statistical consistency. For each detection window $\mathcal{W}_i$, the algorithm iterates through TC intervals $w_{ij}$ of length $W$. At each interval, fresh nonces $\mathcal{N}^{TC}_{j}$ are generated and used to construct the nonce-appended input set $\mathcal{Z}^{TC}_{j}$ per Equation~\eqref{eq:ztc_w}, encapsulating sensor data, Jacobians, and Kalman gain parameters. Together with the propagated state, residual, and reconstruction variables $[\mathcal{X},\mathcal{R},\mathcal{K}]_{j-1}$ from Equations~\eqref{eq:xtc_wm1}--\eqref{eq:ktc_wm1}, these form the TC-Kernel input $\mathcal{I}^{TC}_{j}$, whose output $\mathcal{O}^{TC}_{j} = [\mathcal{X},\mathcal{R},\mathcal{K}]_{j}$ is extracted and propagated to the next interval. After all $W$ TC intervals are processed, $\mathcal{R}_W$ yields the composite covariance $S_W$, whose offline SVD produces $U_W$ and $\Sigma_W^{-\frac{1}{2}}$. These are nonce-augmented into $\mathcal{Z}^{SVD}_W$, which together with $\mathcal{R}_W$ is consumed by the \textsc{SC-Kernel} to produce decision metrics $[\rho,\eta,\kappa]_W$ which are the zkSNARK-backed anomaly indicator for window $\mathcal{W}_i$.

\section{Experimental Results}
\begin{figure*}[!htb]
    \centering
    \includegraphics[width=\textwidth,keepaspectratio]{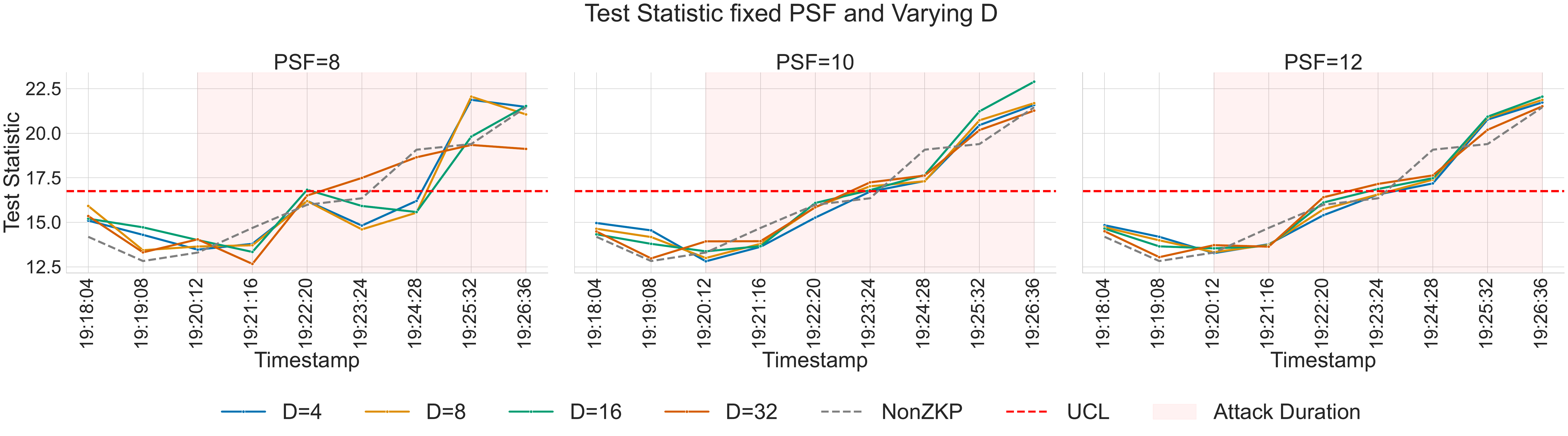}
  \caption{HAI Dataset: Detection Quality with varying D and fixed PSF}\label{fig:P1_dval}
    \vspace{-2mm}
\end{figure*}
\begin{figure*}[!htb]
    \centering
        \includegraphics[width=\textwidth,keepaspectratio]{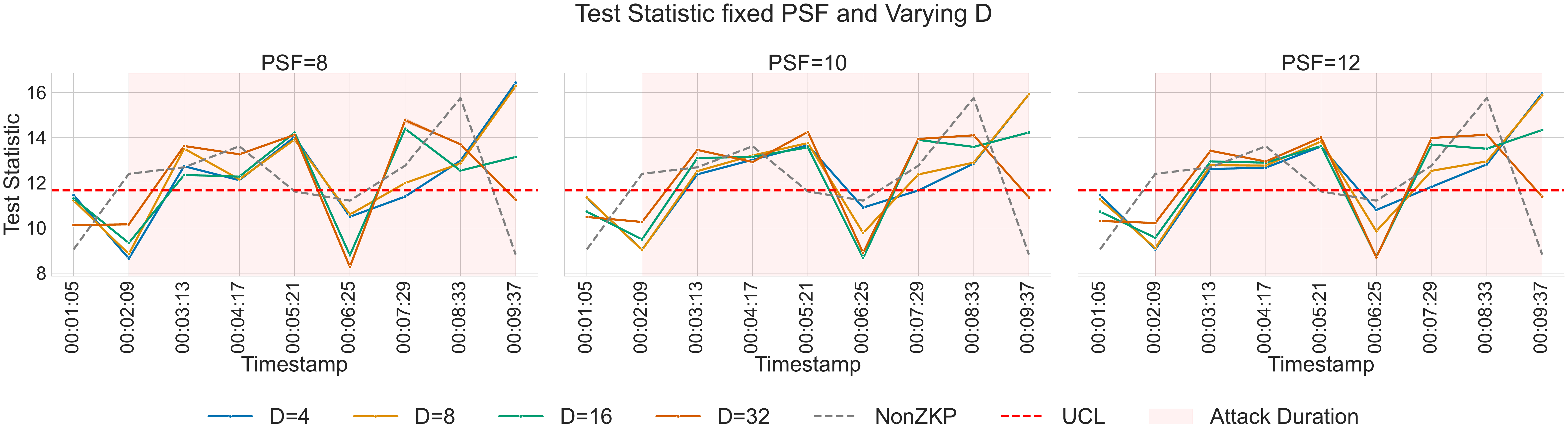}
  \caption{ORNL-PS Dataset: Detection Quality with varying D and fixed PSF}\label{fig:ps_dval}
  \vspace{-4mm}
\end{figure*}
For experimental studies, we utilize the HAI \cite{hai} and ORNL-PS \cite{pan2015classification,pan2015developing} datasets. The training methodology for EKF models is briefly discussed in Appendix \ref{subsec:ekf_train}. We carry out detection based on a pre-determined value of upper control limit (UCL) that is representative of the levels of significance used for the $\chi2$ hypothesis test. These UCL values are generally dependent on the subsystem data and might also rely on engineering domain expertise as well. We use a value of $5e-3$ and $2e-2$ for HAI and ORNL-PS datasets that were obtained after rigorous empirical studies. All of our results include comparison with the non-ZKP version as our benchmark to judge the detection quality.

\noindent
\textbf{System Implementation Details}: All experiments were carried out on a virtual machine running Ubuntu 24.04 with 100GB of RAM and 16 vCPUs, using \texttt{Python 3.11} and \texttt{PyTorch 2.7.1} for detection model inference. zkSTAR is implemented in both native and containerized execution modes using the \texttt{ezkl} library \cite{ezkl2024} for all zkSNARK generation. The zkSTAR code base, implementation details and instructions are provided here \cite{anonym2025zkSTAR}. The native mode targets high-performance computing clusters to evaluate detection accuracy and computational correctness across varying residual aggregation windows ($D$) and precision scale factors (PSF). The containerized mode assesses real-world CPU and memory usage during witness generation and proof generation and verification for TC and SC. Separate containers emulate the utility and regulator roles, communicating via REST APIs managed through the \texttt{zkstarctl} tool.

\subsection{Detection Quality of zkSNARK outcomes}
To evaluate detection quality, we conduct two experiments examining the effects of varying the PSF with fixed D and vice versa. For brevity, PSF variation results are shown in Appendix~\ref{subsec:dq_psf}. Each experiment spans a 10-minute detection window, with attacks initiated mid-run. For HAI, the attack window includes the actual 164s event and a 3-minute buffer to account for the non-linear model's delayed anomaly convergence, whereas in ORNL-PS, the attack persists until the end of the run. We test PSF values of 8, 10, and 12, and $D$ values of 4, 8, 16, and 32.


Figures \ref{fig:P1_dval} and \ref{fig:ps_dval} demonstrate the detection quality with varying D and while fixing the PSF values to 8, 10 and 12 respectively. From both figures we can observe that the detection quality with respect to the UCL remains robust across varying values of D and PSF. Generally, we can observe from both figures that the curves for different D values become more clustered around the non-zkSNARK version with increasing PSF values. This trend remains consistent across both datasets as well. 
In each case we can observe that the detection consistently occurs within the designated attack duration window providing similar quality as the non-ZKSNARK version. 
\begin{table*}[!htb]
\setlength{\tabcolsep}{5pt}
\centering
\begin{tabular}{|l|c|cc|cc|cc|cc|}
\hline
\textbf{PSF} & \textbf{Hypothesis Test} &
\multicolumn{2}{c|}{\textbf{D=1}} &
\multicolumn{2}{c|}{\textbf{D=4}} &
\multicolumn{2}{c|}{\textbf{D=8}} &
\multicolumn{2}{c|}{\textbf{D=16}} \\
\cline{2-10}
\textbf{Value} & \textbf{Time (s)} &
\textbf{Time (s)} & \textbf{Speedup} &
\textbf{Time (s)} & \textbf{Speedup} &
\textbf{Time (s)} & \textbf{Speedup} &
\textbf{Time (s)} & \textbf{Speedup} \\
\hline
8	&	56.06(22.60)	 &70.05(3.03)	 & 1.00 & 	80.81(4.19)	 & 3.5 & 	70.61(2.87)	 & 7.93 & 	78.25(3.75)	 & 14.32 \\ 	
10	&	54.26(19.17)	 &69.26(3.22)	 & 1.00 & 	71.32(3.22)	 & 3.89 & 	75.97(8.88)	 & 7.3 & 	77.11(5.01)	 & 14.37 \\ 	
12	&	53.49(6.60)	 &69.89(3.77)	 & 1.00 & 	66.50(2.34)	 & 4.20 & 	74.59(3.19)	 & 7.5 & 	77.79(3.90)	 & 14.37 \\ 
\hline
\end{tabular}
\caption{HAI Dataset: Speedup and Proof Generation Times (mean (std) in seconds)}
\label{tab:p1_pgt}
\end{table*}
\begin{table*}[!htb]
\setlength{\tabcolsep}{5pt}
\centering
\begin{tabular}{|l|c|cc|cc|cc|cc|}
\hline
\textbf{PSF} & \textbf{Hypothesis Test} &
\multicolumn{2}{c|}{\textbf{D=1}} &
\multicolumn{2}{c|}{\textbf{D=4}} &
\multicolumn{2}{c|}{\textbf{D=8}} &
\multicolumn{2}{c|}{\textbf{D=16}} \\
\cline{2-10}
\textbf{Value} & \textbf{Time (s)} &
\textbf{Time (s)} & \textbf{Speedup} &
\textbf{Time (s)} & \textbf{Speedup} &
\textbf{Time (s)} & \textbf{Speedup} &
\textbf{Time (s)} & \textbf{Speedup} \\
\hline
8  & 49.01 (9.68) & 81.74 (2.85) & 1.00 & 57.75 (1.35) & 5.66 & 67.69 (1.74) & 9.66 & 72.64 (1.77) & 18.0 \\
10 & 50.79 (1.32) & 69.82 (2.29) & 1.00 & 65.06 (1.17) & 4.29 & 68.66 (1.75) & 8.13 & 73.28 (1.83) & 15.24 \\
12 & 52.40 (37.36) & 69.27 (3.99) & 1.00 & 64.45 (1.26) & 4.29 & 67.21 (1.22) & 8.24 & 73.32 (2.17) & 15.11 \\
\hline
\end{tabular}
\caption{ORNL-PS Dataset: Speedup and Proof Generation Times (mean (std) in seconds)}
\label{tab:ps_pgt}
\end{table*}
\subsection{Computational Correctness of zkSTAR output}
The SVD and Kalman gain decompositions in the zkSTAR framework are computed offline. As a result, zkSTAR outputs reconstruction error, the associated threshold and a compute check flag for the SVD and Kalman gain matrices. We define the computational correctness as the absolute error emanating from the in-circuit reconstruction of the SVD and Kalman gain matrices computed with respect to the circuit encoded thresholds. Therefore, these experiments demonstrate the variation in relative absolute errors of all matrix elements with respect to the given threshold. In our experiments the threshold values for Kalman and SVD checks were set to 0.1. Figure \ref{fig:svd_cc} shows box plots of relative error for varying 
D values on the HAI and ORNL-PS datasets (PSF = 12) based on the SVD of predicted covariance matrices. For both datasets, relative errors remain below 1 in most cases, indicating reconstruction errors under 0.1. Slight error spikes above 1 occur during attack windows due to increased residual volatility, yet even in the worst case, relative errors stay below 1.4 (HAI) and 1.2 (ORNL-PS). Figure \ref{fig:kcr_cc} shows scatter plots of relative errors across varying \(D\) values. 
Unlike the SVD check, the Kalman gain check, which is performed every \(D\) steps, exhibits an initial staggered trend. 
For the HAI dataset, errors remain consistently low around \(1e-5\), while for the ORNL-PS dataset they are slightly higher yet stable, around \(1e-2\).

\subsection{System Performance of Utility Prover}
We begin by analyzing the system performance of zkSTAR proving system. Specifically, we examine the witness, proof, key and circuit sizes in addition to analyzing the utility's system performance of witness proof generation.
We analyze the computational performance  for temporal and statistical consistency proof generation for both datasets for varying D values. Circuit generation and setup costs are presented separately in Appendix \ref{subsec:cir_setup_cost}. Our results are presented in Tables \ref{tab:p1_pgt} and \ref{tab:ps_pgt} for HAI and ORNL-PS datasets respectively. The data for these experiments are obtained from multiple runs of varying PSF values and represent the mean and standard deviation for one proof invocation of the circuit. We also provide a speed up value that represents the average case speedup that can be expected when sequentially generating proofs for the entire horizon. We provide the speedup formula and its associated explanation using Equation \eqref{eq:speedupcalc} in Appendix \ref{subsec:pgt_app_subsec} respectively. 
\begin{figure}[!htb]
    \centering        \includegraphics[width=0.48\textwidth,keepaspectratio]{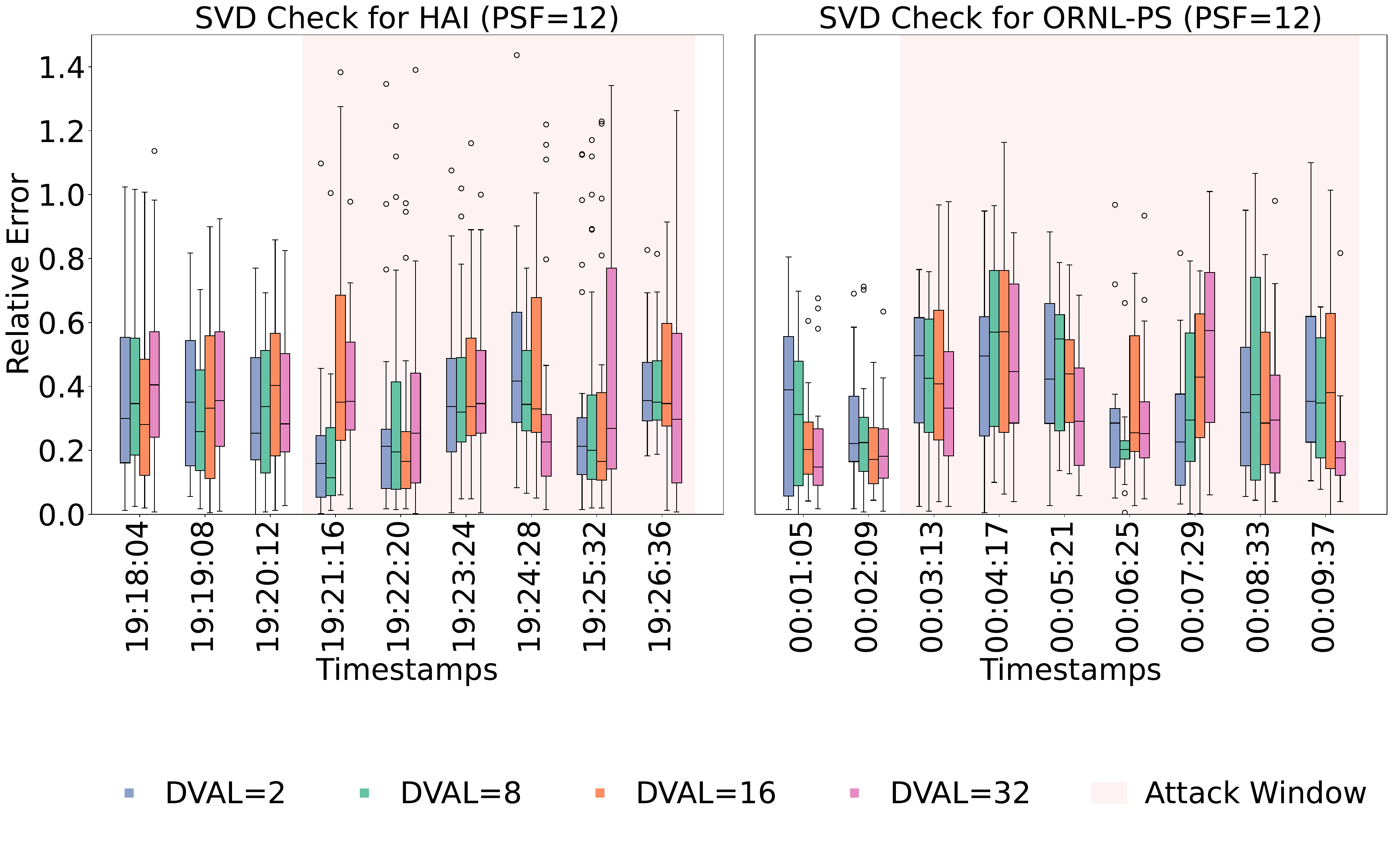}
  \caption{Computational error check for SVD of Kalman gain and aggregated covariance matrices}\label{fig:svd_cc}
  \vspace{-4mm}
\end{figure}
The temporal consistency proof generation times are represented by the mean and standard deviation values for D values of 1,4,8 and 16. The proving times are in general stable with minor variations observed with varying D values. We observe that in case of HAI, the mean proof generation time increases slightly going from D values of 1 to 4 before decreasing slightly from 4 to 8 and finally increasing from 8 to 16. On the other hand the trend is inverted for ORNL-PS, decreasing slightly from 1 to 4, increasing from 4 to 8 and then decreasing again from 8 to 16. 
\begin{figure}[!htb]
    \centering
\includegraphics[width=0.48\textwidth,keepaspectratio]{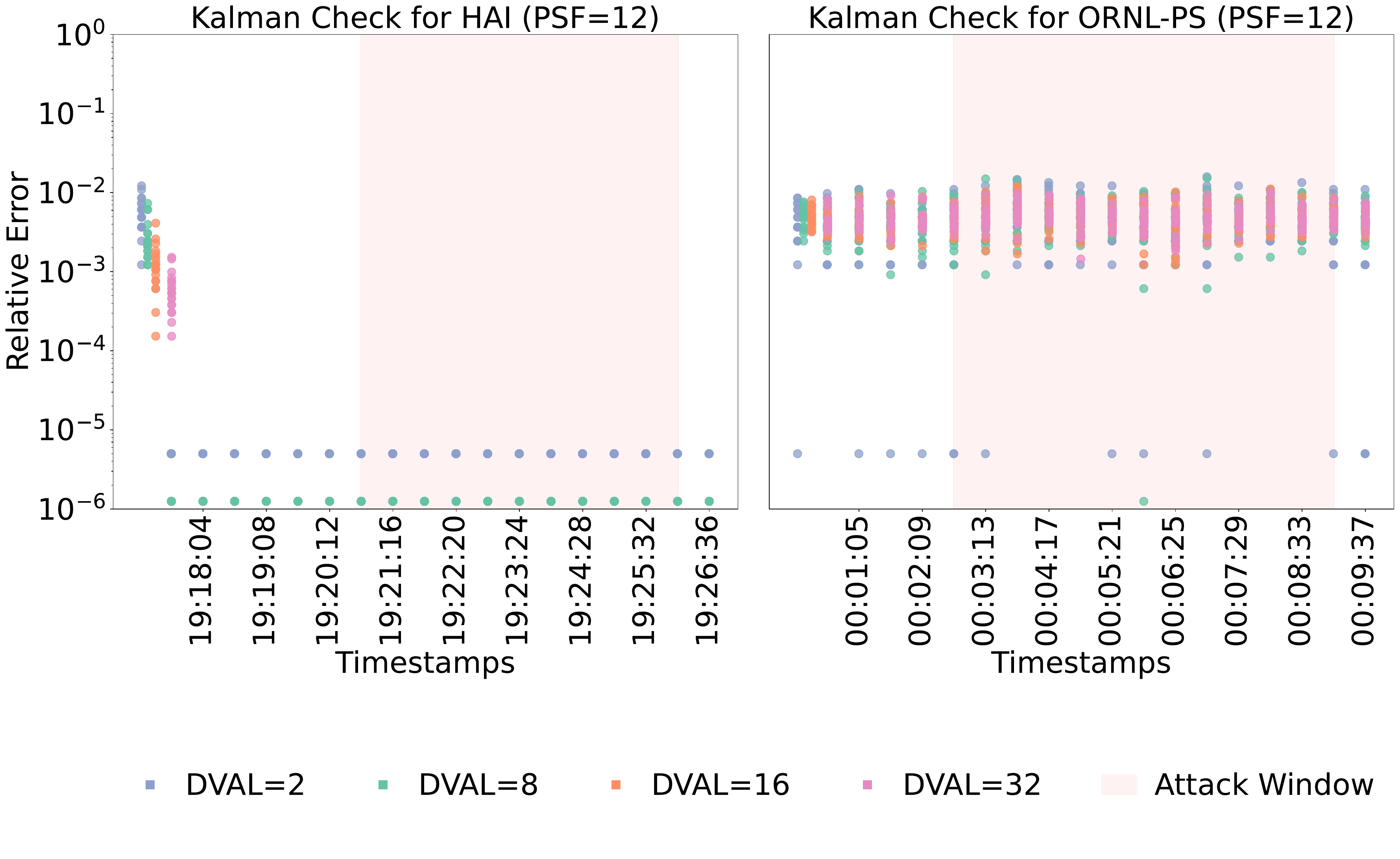} 
  \caption{Computational error check for SVD of Kalman gain and aggregated covariance matrices}\label{fig:kcr_cc}
\vspace{-2mm}
\end{figure}
Interestingly, there is more volatility for PSF value of 8 in case of either dataset as compared to 10 and 12. The non-monotonous trends depicted in Tables \ref{tab:p1_pgt} and \ref{tab:ps_pgt} can be due to the non-linear proving costs afforded by \texttt{ezkl} backend which could be explained on the basis of row domain padding used by Halo2 proving systems. Further, analyzing speedups, we note a healthy increase between each set of D values regardless of the PSF value or the dataset used. Since the circuit size is known to influence the proving times, the trends in Tables \ref{tab:p1_pgt} and \ref{tab:ps_pgt} are exactly aligned with those depicted in Table \ref{tab:tcws}. The speedup value insights point to the fact that having a higher TC interval size (or D value) is almost always more computationally efficient than invoking proof generation on a finer granularity corresponding to a lower TC interval size.


Lastly, the hypothesis test columns in Tables \ref{tab:p1_pgt} and \ref{tab:ps_pgt} reflect a relatively stable values of average proving time corresponding to statistical consistency proofs. These values are slightly higher for HAI compared to ORNL-PS which might be explained on the basis of the difference in dimensionality between both problems.



To gauge the regulatory overhead in terms of verification, we measure the verification times incurred by the regulator for verifying TC and SC proof artifacts. In almost all cases, zkSTAR SC and TC proofs are successfully verified in under 1.5s for both datasets with varying TC interval sizes. Detailed results pertaining to verification times for each case has been presented in Appendix \ref{subsec:verification_times}. These results assert the practicality of zkSTAR for near real-time regulatory audits with minimal overhead. 



\section{Conclusion}
In this paper, we address the problem of regulatory compliance in data-driven cyberattack detection for utility stakeholders within large-scale Critical Infrastructure Networks (CINs). Privacy and efficiency constraints make full-fledged ICS data audits impractical, creating a key challenge for regulators seeking trustworthy verification of local detection outcomes. The absence of private, verifiable mechanisms can degrade network-wide situational awareness and increase false positives, resulting in unnecessary downtimes.

To overcome these limitations, we present zkSTAR, a zkSNARK-based framework that enables publicly verifiable attack detection while preserving complete zero knowledge of local ICS state-space dynamics. zkSTAR provides formal guarantees of temporal and statistical consistency for utility-level detections, linking proof verification directly to the authenticity of attack alarms. We theoretically prove knowledge-soundness claims in zkSTAR while demonstrating its robustness to attack suppression in case of compromised utilities. For scalability, the framework decomposes proof generation into distinct zkSNARK-based kernels, ensuring computational efficiency and modularity.

We evaluate zkSTAR on a containerized testbed using real-world datasets from HAI and ORNL-PS, demonstrating robust performance across varying TC intervals and precision scales. System-level analysis further confirms the framework’s stability in terms of CPU and memory utilization. Overall, our results show that zkSTAR enables secure, privacy-preserving, and publicly verifiable cyberattack detection suitable for diverse CIN deployment scenarios.

\newpage
\section*{Acknowledgements}
\noindent \textit{Generative AI Usage}: Generative AI tools such as Claude were used for editorial purposes in the preparation of this manuscript (e.g., grammar refinement, stylistic polishing, and clarity improvements). The authors relied on generative AI tools like Claude for assisting with the creation of skeletal code for certain class and function declarations as well as for routine debugging activities for the core programming logic. All AI-assisted content was reviewed, verified, and validated by the authors, including re-running experiments and checking citations.

\bibliographystyle{ieeetr} 
\bibliography{main}



\appendices

\section{Extended Kalman Filter Model and Training}
\subsection{EKF Modeling for ICS}\label{app:ekf_app}
Our state space modeling framework is based on the extended Kalman filter (EKF) modeling paradigm that is capable of handling non-linearities inherent in the state space process \cite{coskun2017long}. In order to represent the EKF model, we use a deep neural network driven strategy that consists of one feed-forward and three LSTM sub-modules. Specifically, we consider a sensor-driven non-linear system at time $t$, where $x_t\in R^{m}$ represents the latent space embedding, {$u_t\in R^{m}$} represents the control action and $y_t\in R^{d}$ represents noisy sensor measurements from asset sensors.
\begin{gather}
    x_{t+1}= g(x_{t-1},u_{t-1}) + v_t,\label{eq:eqn1a}\\
    y_t=h(x_t)+w_t, \label{eq:eqn1b}
\end{gather}
In Equations \eqref{eq:eqn1a}, \eqref{eq:eqn1b}, {${g}, {h}$} are the state transition and observation functions respectively. The process and measurement noises at time $t$ are denoted by $v_t\in R^{m}$,  and $w_t\in R^{m}$ respectively. The process and measurement noises follow multivariate normal distributions with zero mean implying that $v_t \sim N(0,Q_t), w_t \sim N(0,R_t)$, where $Q_t,R_t$ represent the covariance matrices respectively. Such a type of modeling framework has also been used extensively in prior art \cite{smith2011decoupled, mo2014detecting, teixeira2010cyber}. Figure \ref{fig:temp_consist_core} represents the dynamics of our LSTM based deep neural network that is specifically tailored to establish the extended Kalman filter model for ICS subsystems.

\begin{figure}[!htb]
    \centering
    \includegraphics[width=0.48\textwidth,keepaspectratio]{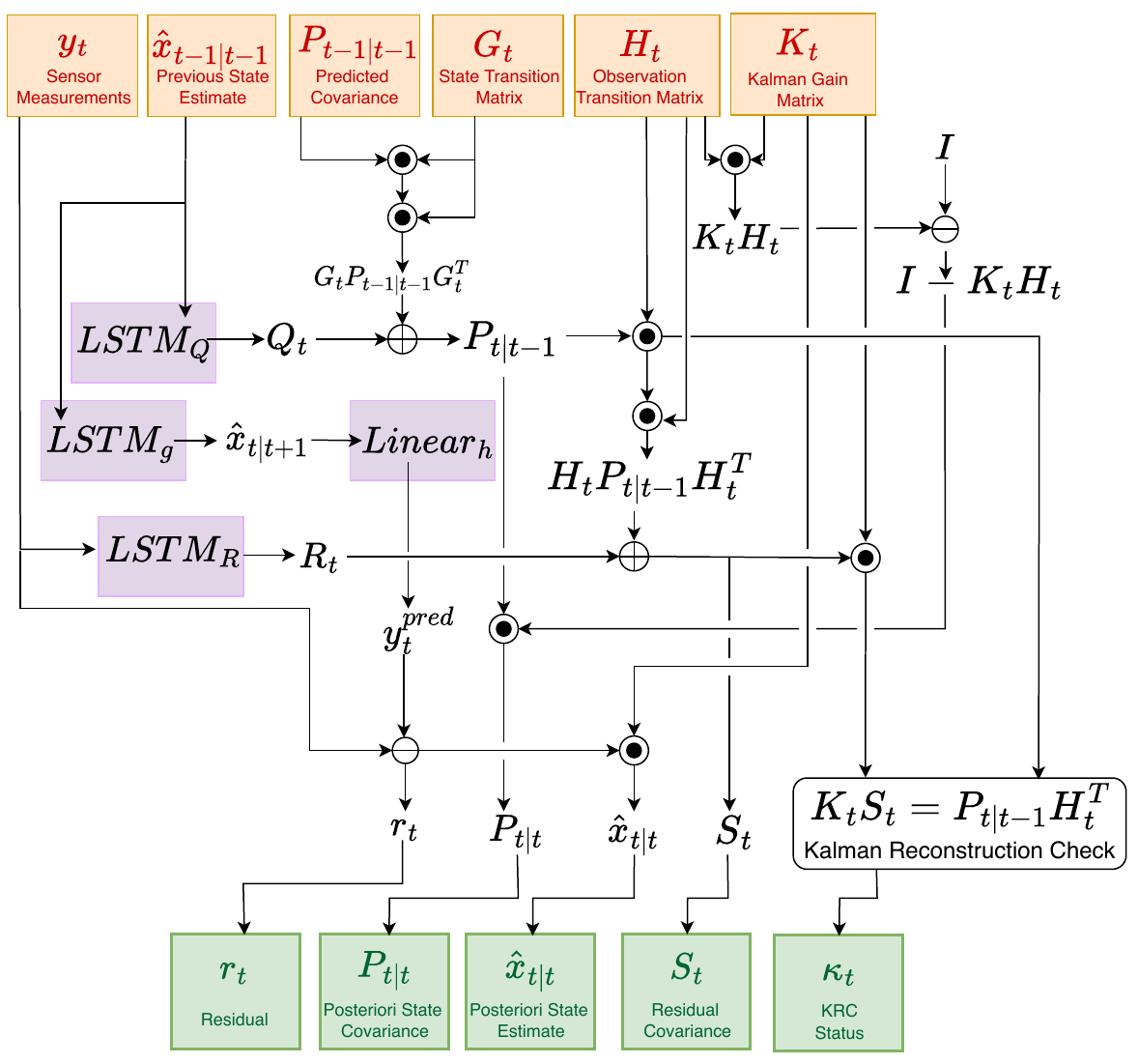}
  \caption{Non Linear Kalman Filter based Temporal State Estimation}\label{fig:temp_consist_core}
\end{figure}

\subsection{EKF training methodology}\label{subsec:ekf_train}
During the training phase, the efficacy of the model is assessed by comparing the estimated state $ x_{t|t} $ against the ground-truth state $x_t$ through a carefully constructed loss function. This function typically comprises a mean squared error (MSE) term to minimize the divergence between the estimated and true states. To ensure statistical robustness and numerical stability, regularization terms are incorporated to constrain the norms and conditioning of the covariance matrices $ Q_t$ and $R_t$, as well as the Kalman gain $ K_t $. These constraints mitigate overfitting and foster consistent learning behavior across iterations.

The total loss is subsequently backpropagated through the model's architecture via automatic differentiation techniques, enabling the optimization of all learnable parameters. This includes parameters embedded within the state transition function, the measurement function, and the covariance estimation modules. Through iterative training over sequential data, the KFLSTM model progressively refines its ability to encapsulate the system's dynamic behavior and uncertainty propagation, aligning seamlessly with the theoretical underpinnings of non-linear Kalman filtering.

\section{Theorem and Lemma Proofs}
\subsection{Proof of Theorem \ref{thm:TC}}\label{app:TC_proof}
\begin{proof}
We know from Definition \ref{defn1}, for practical purposes $\mathcal{H}^{out}_t$ and $\mathcal{H}^{in}_{t+1}$ can be used to uniquely identify $x^{out}_t$ and $x^{in}_{t+1}$. Therefore, it is sufficient to demonstrate the following
\begin{gather}
\Phi_{t+1} = 1  \implies \mathcal{H}^{in}_{t+1} = \mathcal{H}^{out}_{t}\\
\mathcal{H}^{in}_{t+1} = \mathcal{H}^{out}_{t} \implies \Phi_{t+1} = 1
\end{gather}
(Case 1 $\Phi_{t+1} = 1  \implies \mathcal{H}^{in}_{t+1} = \mathcal{H}^{out}_{t}$): Assume that for some \(t,t+1\), there exists a $\tilde{x}^{in}_{t+1} \neq x^{out}_t$ such that $\Phi_{t+1} = 1$. Consequently, we can establish $\tilde{\Delta}^{in}_{t+1} = (\tilde{\mathcal{H}}^{in}_{t+1},\tilde{x}^{in}_{t+1},y_{t+1})$, where $\tilde{\mathcal{H}}^{in}_{t} = \verb|Comm|(\tilde{x}^{in}_{t+1})$. Using Definitions \ref{defn4} and \ref{defn5}, we can obtain $\tilde{\Pi}_{t+1},\tilde{\Phi}_{t+1}$ such that the following equation holds.
\begin{gather}
\tilde{\Pi}_{t+1} = \texttt{Prove}(pk,\theta,\mathcal{M},\tilde{\Delta}^{in}_{t+1},\tilde{\Delta}^{out}_{t+1}) \label{eq:41c11a}\\
\tilde{\Phi}_{t+1} = \texttt{Verify}(vk,\tilde{\Pi}_{t+1},\tilde{\Delta}^{in}_{t+1},\tilde{\Delta}^{out}_{t+1}) = 1 \label{eq:41c11}
\end{gather}
Using $\mathcal{H}^{out}_t$ to construct $\Delta^{in}_{t+1} = (\mathcal{H}^{out}_t,x^{out}_{t+1},y_{t+1})$ gives
\begin{gather}
\Phi_{t+1} = \texttt{Verify}(vk,\tilde{\Pi}_{t+1},\Delta^{in}_{t+1},\tilde{\Delta}^{out}_{t+1})\label{eq:41c12}
\end{gather}
By the soundness argument of the zkSNARK proving system, we know that $\Phi_{t+1} \neq \tilde{\Phi}_{t+1}$ since verification statements in Equations \eqref{eq:41c11} and \eqref{eq:41c12} cannot hold simultaneously. More specifically, proof artifacts generated for public inputs corresponding to $\tilde{\mathcal{H}}^{in}_{t+1}$ cannot verify against a different public input $\mathcal{H}^{out}_t \neq \tilde{\mathcal{H}}^{in}_{t+1}$.  As a result $\Phi_{t+1} \neq 1$ which presents a contradiction to our stated assumptions.\\
\noindent(Case 2 $\mathcal{H}^{in}_{t+1} = \mathcal{H}^{out}_{t} \implies \Phi_{t+1} = 1$): Assume that for some $t+1$, $\Phi_{t+1} \neq 1$ such that $\mathcal{H}^{in}_{t+1} = \mathcal{H}^{out}_{t}$. Further let $\tilde{\Delta}^{in}_{t+1} = (\mathcal{H}^{out}_t,x^{in}_{t+1},y_{t+1})$ and $\Delta^{in}_{t+1} = (\mathcal{H}^{in}_{t+1},x^{out}_{t},y_{t+1})$. Using Definition \ref{defn2}, we can assert that $\Delta^{in}_{t+1}$ and $\tilde{\Delta}^{in}_{t+1}$ are both valid inputs to $\mathcal{M}$ parametrized by $\theta$. Additionally, we can also assert that there exist $\Pi_{t+1}, \Delta^{out}_{t+1}, \tilde{\Pi}_{t+1},  \tilde{\Delta}^{out}_{t+1}$ such that
\begin{gather}
\Phi_{t+1} = \texttt{Verify}(vk,\Pi_{t+1},\Delta^{in}_{t+1},\Delta^{out}_{t+1}) = 0 \label{eq:41c21}\\
\tilde{\Phi}_{t+1} = \texttt{Verify}(vk,\tilde{\Pi}_{t+1},\tilde{\Delta}^{in}_{t+1},\tilde{\Delta}^{out}_{t+1}) = 1\label{eq:41c22}
\end{gather}
Since we assume that $\mathcal{H}^{in}_{t+1} = \mathcal{H}^{out}_{t}$, we know that $\tilde{\Delta}^{in}_{t+1} = \Delta^{in}_{t+1}$ due to commitment binding. 
This means that for $\Phi_{t+1} = 1$ either the proof artifact $\Pi_{t+1}$ is invalid or $\Delta^{out}_{t+1}$ is the incorrect output corresponding to $\Delta^{in}_{t+1}$. If the proof is incorrect, it means that $\Phi_{t+1}=0$ holds trivially which invalidates $\Delta^{in}_{t+1},\Delta^{out}_{t+1}$ as viable inputs and outputs to the anomaly detector contradicting our assumption of hash consistency. 
On the other hand, since $\mathcal{M}(\cdot,\theta)$ is deterministic, given a valid proof $\tilde{\Pi}_{t+1}$ corresponding to an input tuple $\Delta^{in}_{t+1}=\tilde\Delta^{in}_{t+1}$ must necessarily result in $\Delta^{out}_{t+1} = \tilde\Delta^{out}_{t+1}$. This implies that $\Phi_{t+1} = 1$ which directly contradicts our assumption.
\end{proof}

\subsection{Proof of Lemma \ref{lem:LM1}}\label{subsec:LM1_proof}

\begin{proof}
Suppose $\tilde{y}_t$ and $y_t$ are probabilistically similar per Definition~\ref{defn8}. By Definition~\ref{defn8}, either (i) $\tilde{T}_t > \chi^2_{p,\alpha}$ and $T_t > \chi^2_{p,\alpha}$, or (ii) $\tilde{T}_t \leq \chi^2_{p,\alpha}$ and $T_t \leq \chi^2_{p,\alpha}$. In case~(i), both measurements trigger an alarm, so $\tilde{\rho}_t = \rho_t = 1$. In case~(ii), neither measurement triggers an alarm, so $\tilde{\rho}_t = \rho_t = 0$. In both cases $\tilde{\rho}_t = \rho_t$, establishing statistical consistency.
\end{proof}

\subsection{Proof of Theorem \ref{thm:SC}}\label{subsec:SC_proof}
\begin{proof}
We know from Theorem \ref{thm:TC} that $\mathcal{H}^{in}_{t+1} = \mathcal{H}^{out}_{t} \text{ , and } \Phi_{t+1} = 1$ is a necessary and sufficient condition for temporal consistency for any given measurement $y_{t+1}$. We also note that the temporal consistency check $\mathcal{H}^{in}_{t+1} = \mathcal{H}^{out}_{t}$ is independent of the underlying $y_{t+1}$.
For proving this theorem, we assume that $\rho^{real}_{t+1}$ and $\rho^{rep}_{t+1}$ represent the alarm outcomes pertaining to the true post attack and the reported sensor data values $y^{real}_{t+1}$ and $y^{rep}_{t+1}$ respectively. Therefore, it suffices to show that for a temporally consistent system with $y^{rep}_{t+1}$, the verification status $\Phi^{rep}_{t+1} = 1$ is a necessary and sufficient condition for statistical consistency to hold with respect to $y^{real}_{t+1}$ subject to EKF sensitivity and Type 1 error probability bounds.\\ 
\noindent(Case 1: $\Phi^{rep}_{t+1} = 1 \implies$ statistical 
consistency with probability $\geq 1 - \epsilon_{max} - \alpha$): Assume $\Phi^{rep}_{t+1} = 1$. There exists a system of proof artifacts pertaining to $\rho^{rep}_{t+1}$ satisfying
\begin{gather}
\tilde{\Delta}^{in}_{t+1} = (\mathcal{H}^{in}_{t+1},x^{in}_{t+1},y^{rep}_{t+1})\\
\tilde{\Delta}^{out}_{t+1} = (\mathcal{H}^{out}_{t+1},\rho^{rep}_{t+1},{x}^{out}_{t+1}, r_{t+1}, K_{t+1}, Q_{t+1}, R_{t+1})\\
\tilde{\Pi}_{t+1} = \texttt{Prove}(pk,\theta,\mathcal{M},\tilde{\Delta}^{in}_{t+1},\tilde{\Delta}^{out}_{t+1})\\
\Phi^{rep}_{t+1} = \texttt{Verify}(vk,\tilde{\Pi}_{t+1},\tilde{\Delta}^{in}_{t+1},\tilde{\Delta}^{out}_{t+1}) = 1
\end{gather}
Since $\mathcal{M}(\cdot;\theta)$ is deterministic and $\Phi^{rep}_{t+1}=1$, the reported alarm $\rho^{rep}_{t+1}$ is a committed deterministic outcome of the EKF executed on $y^{rep}_{t+1}$. Under this circumstance, we consider two types of possible events involving $y^{real}_{t+1}$ and $y^{rep}_{t+1}$.

We note that $y^{real}_{t+1}$ being probabilistically similar to $y^{rep}_{t+1}$ guarantees $\rho^{rep}_{t+1} = \rho^{real}_{t+1}$ under Lemma \ref{lem:LM1}. On the other hand, $y^{real}_{t+1}$ being probabilistically dissimilar to $y^{rep}_{t+1}$ is the only event type under which $\rho^{rep}_{t+1} \neq \rho^{real}_{t+1}$ is possible. The event $\rho^{rep}_{t+1} \neq \rho^{real}_{t+1}$ corresponds to one of two cases pertaining to either $T^{rep}_{t+1} < \chi^2_{m,\alpha}$ given $T^{real}_{t+1} > \chi^2_{m,\alpha}$, or $T^{rep}_{t+1} > \chi^2_{m,\alpha}$ given $T^{real}_{t+1} < \chi^2_{m,\alpha}$. By Assumption~\ref{as:as3}, the probability of $T^{rep}_{t+1} < \chi^2_{m,\alpha}$ given $T^{real}_{t+1} > \chi^2_{m,\alpha}$ is bounded by $\epsilon_{max}$. On the other hand, $T^{rep}_{t+1} > \chi^2_{m,\alpha}$ given $T^{real}_{t+1} < \chi^2_{m,\alpha}$ corresponds to a Type 1 error of the $\chi^2$ anomaly detector which is bounded by Type 1 error rate $\alpha$. Therefore,
\[
\Pr\Big[\rho^{rep}_{t+1} \neq \rho^{real}_{t+1}\Big] \leq \epsilon_{max} + \alpha
\]
and statistical consistency holds with probability at least $1 - \epsilon_{max} - \alpha$.

\noindent(Case 2: statistical consistency $\implies 
\Phi^{rep}_{t+1} = 1$): Assume $\Phi^{rep}_{t+1} \neq 1$ and $\rho^{real}_{t+1} = 
\rho^{rep}_{t+1}$ with $\Phi^{real}_{t+1}$ denoting the verification outcome for the true post attack measurement. As a result, we consider a proof artifact $\Pi_{t+1}$ that encodes the mapping of a legal input ${\Delta}^{in}_{t+1}$ to a legal output ${\Delta}^{out}_{t+1}$ such that the following holds:
\begin{gather}
{\Delta}^{in}_{t+1} = (\mathcal{H}^{in}_{t+1},x^{in}_{t+1},y^{rep}_{t+1})\\
{\Delta}^{out}_{t+1}=(\mathcal{H}^{out}_{t+1},\rho^{rep}_{t+1},{x}^{out}_{t+1}, r^{rep}_{t+1}, K_{t+1}, Q_{t+1}, R_{t+1})\\
\Phi^{rep}_{t+1} = \texttt{Verify}(vk,{\Pi}_{t+1},{\Delta}^{in}_{t+1},{\Delta}^{out}_{t+1}) = 0
\end{gather}
Alternately, we also know that there exists a proof artifact $\tilde \Pi_{t+1}$ that maps a legal input $\tilde{\Delta}^{in}_{t+1}$ to a legal output $\tilde{\Delta}^{out}_{t+1}$ such that the following holds:
\begin{gather}
\tilde{\Delta}^{in}_{t+1} = (\tilde{\mathcal{H}}^{in}_{t+1},\tilde{x}^{in}_{t+1},y^{real}_{t+1})\\
\tilde{\Delta}^{out}_{t+1} = (\mathcal{H}^{out}_{t+1},\rho^{real}_{t+1},\tilde {x}^{out}_{t+1},\tilde r_{t+1},\tilde K_{t+1},\tilde Q_{t+1},\tilde R_{t+1})\\{\Phi}^{real}_{t+1} = \texttt{Verify}(vk,\tilde{\Pi}_{t+1},\tilde{\Delta}^{in}_{t+1},\tilde{\Delta}^{out}_{t+1}) = 1
\end{gather}
Given $\tilde\Delta^{out}_{t+1}, \Delta^{out}_{t+1}$ are deterministic outcomes of $\mathcal{M}(\cdot, \theta)$, it suffices to only analyze the condition surrounding $\tilde\Delta^{in}_{t+1}, \Delta^{in}_{t+1}$. Moreover, we know that $\Phi^{rep}_{t+1} = 0$ can only occur if the proof artifact $\Pi_{t+1}$ fails to map ${\Delta}^{in}_{t+1}$ to ${\Delta}^{out}_{t+1}$. This could be either due to $\Pi_{t+1}$ being the incorrect proof artifact for ${\Delta}^{in}_{t+1}, {\Delta}^{out}_{t+1}$, or ${\Delta}^{in}_{t+1}$ being the incorrect input to the correct $\Pi_{t+1}$. 
An incorrect proof trivially yields $\Phi^{rep}_{t+1}=0$ which means that $\rho^{rep}_{t+1}$ is invalidated as a potential outcome of the anomaly detector which contradicts our assumption of SC. Therefore, for statistical consistency to hold, $\Phi^{real}_{t+1} = 1$, $\Phi^{rep}_{t+1} = 0$ can occur only if  $\tilde\Delta^{in}_{t+1} \neq \Delta^{in}_{t+1}$. 

As a result, we are left with two possibilities that justify $\tilde\Delta^{in}_{t+1} \neq \Delta^{in}_{t+1}$, either $y^{real}_{t+1} \neq y^{rep}_{t+1}$ or $(\mathcal{H}^{in}_{t+1}, x^{in}_{t+1}) \neq (\tilde{\mathcal{H}}^{in}_{t+1}, \tilde{x}^{in}_{t+1})$. Given that our zkSNARK circuits are predetermined and fixed, $y^{real}_{t+1}$ and $y^{rep}_{t+1}$ must necessarily generate proof artifacts that successfully verify, creating a contradiction to our supposition of $\Phi^{rep}_{t+1} = 0$.

Consequently, we are left with $(\mathcal{H}^{in}_{t+1}, x^{in}_{t+1}) \neq (\tilde{\mathcal{H}}^{in}_{t+1}, \tilde{x}^{in}_{t+1})$. We know that $\Phi^{real}_{t+1} = 1$, guarantees that $(\tilde{\mathcal{H}}^{in}_{t+1}, \tilde{x}^{in}_{t+1})$ is temporally consistent with previous timestep $t$ due to Theorem \ref{thm:TC}. Therefore, $(\mathcal{H}^{in}_{t+1}, x^{in}_{t+1}) \neq (\tilde{\mathcal{H}}^{in}_{t+1}, \tilde{x}^{in}_{t+1})$ indicates that $(\mathcal{H}^{in}_{t+1}, x^{in}_{t+1})$ lacks temporal consistency. As a result, the claim of $\rho^{rep}_{t+1}$ as the outcome of the anomaly detector is invalid which contradicts our assumption of statistical consistency. 

As a result, $\Phi^{rep}=1$ holds under conditions of statistical consistency which directly substantiates temporal consistency claims.

\end{proof}

\subsection{Proof of Theorem \ref{thm:dsks}}\label{subsec:proof_dsks}
\begin{proof}
We consider the case wherein $\mathcal{A}$ produces $\Pi^{rep}_t$ such that $\Phi^{rep}_t = 1$ while $\Psi^{rep}_t=1$ for any time $t>t^{attack}$. 
By knowledge soundness of the Halo2 proving system \cite{halo2_zcash_2023}, if $\Phi^{rep}_t = 1$ implies that there exists an efficient extractor that can recover the valid witness $\Omega^{rep}_t$ from utility $\mathcal{U}$ with probability at least $1 - \texttt{negl}(\lambda)$. We also know from Theorem \ref{thm:TC} that $\Phi^{rep}_t=1$ guarantees temporal consistency for the underlying system operated on $y^{rep}_t$. Using Theorem \ref{thm:SC}, we can say that statistical consistency also holds with respect to the true post attack measurement $y^{real}_t$ with lower bound $1-\epsilon_{max} -\alpha$. Therefore, for any $t > t^{attack}$ the probability of the adversary to generate a witness ${\Omega}^{rep}_t$ that leads to $\Psi^{rep}_t=1$ while simultaneously ensuring that its supplied proof artifact verifies correctly (i.e. $\Phi^{rep}_t=1$) is given by
\[
    \Pr[\Psi^{rep}_t=1|\Phi^{rep}_t=1] \leq  \epsilon_{max} +\alpha +\texttt{negl}(\lambda) 
\]
\end{proof}

\section{Additional Experimental Results}
\subsection{Proof Setup and Generation Times}\label{subsec:pgt_app_subsec}
\noindent The speedup relation used in Tables \ref{tab:p1_pgt}, \ref{tab:ps_pgt} can be given as:
\begin{gather}\label{eq:speedupcalc}
speedup = D.\frac{\text{(average proving time for D=1)}}{\text{(average proving time for D $\neq$ 1)}}
\end{gather}
\begin{table}[ht]
\centering
\scriptsize
\setlength{\tabcolsep}{3pt}
\renewcommand{\arraystretch}{0.85}
\begin{tabular}{|c|c|c|c|}
\hline
\textbf{Dataset} & \textbf{DVAL} & \textbf{CPU (\%)} & \textbf{Memory (GB)} \\
\hline
HAI     & 16 & 1081.584(604.146) & 30.135(9.980) \\
ORNL-PS & 16 & 1081.584(604.146) & 30.135(9.980) \\
\hline
\end{tabular}
\caption{CPU and Memory Utilization (mean(std)) for Statistical consistency }\label{tab:sysper_sc}
\end{table}
\begin{figure}[!htb]
    \centering
{\includegraphics[width=0.49\textwidth,keepaspectratio]{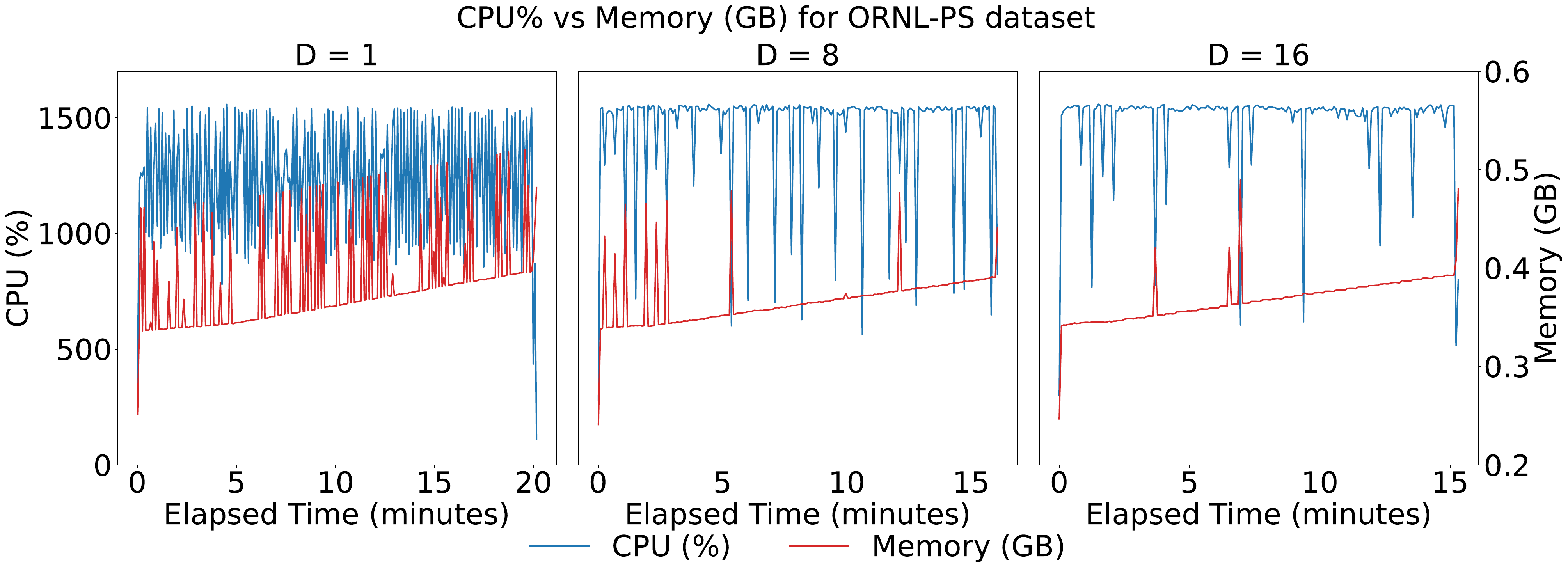}}\label{fig:sysperf_ornl-ps_wg}
  \caption{Witness Generation Performance with varying D for ORNL-PS dataset}\label{fig:ornl-ps_wgp}
\end{figure}

\begin{figure}[!htb]
    \centering
{\includegraphics[width=0.49\textwidth,keepaspectratio]{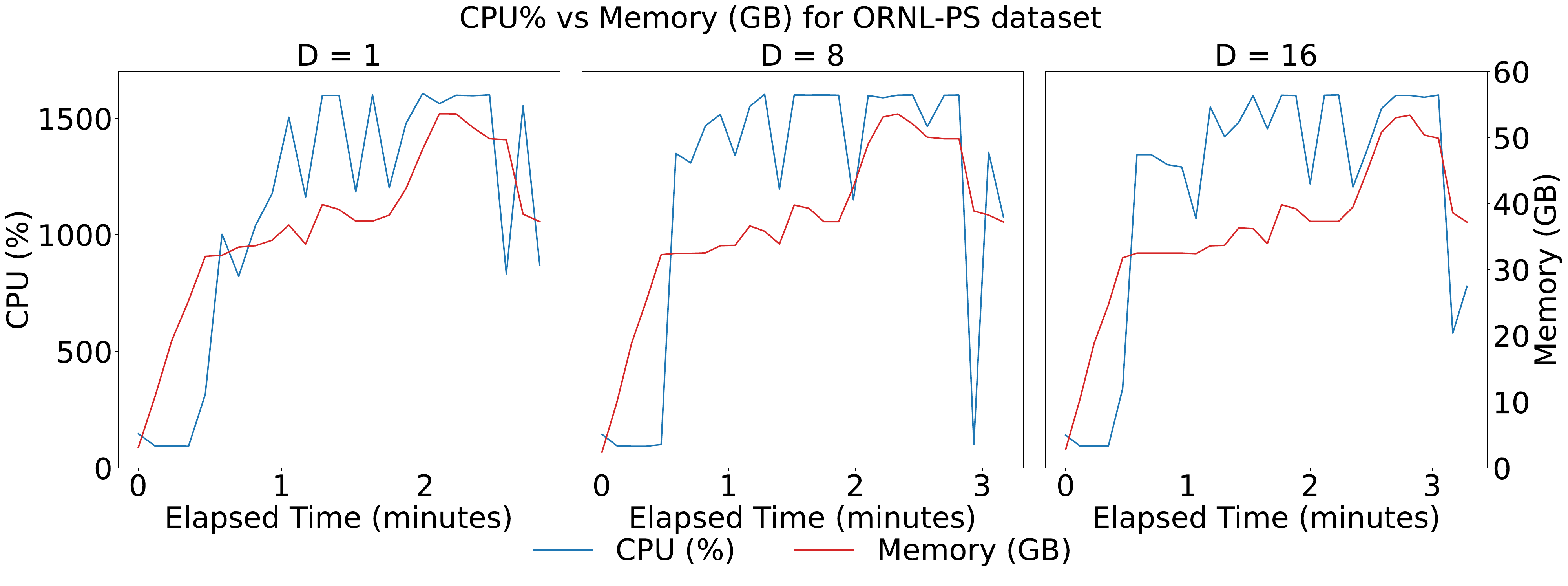}}\label{fig:sysperf_ornl-ps_pp}
  \caption{Prover Performance for Temporal Consistency  with varying D for ORNL-PS dataset}\label{fig:ornl-ps_pptc}
\end{figure}
The speedup computation is based on obtaining the ratio between the average time consumed for the entire set of D unrolled loop iterations when $D=1$ and the corresponding values when $D\neq1$. Since the time horizon appears in the numerator and denominator, they cancel each other resulting in Equation \eqref{eq:speedupcalc}.
\begin{figure*}[!htb]
    \centering
        \includegraphics[width=0.9\textwidth,keepaspectratio]{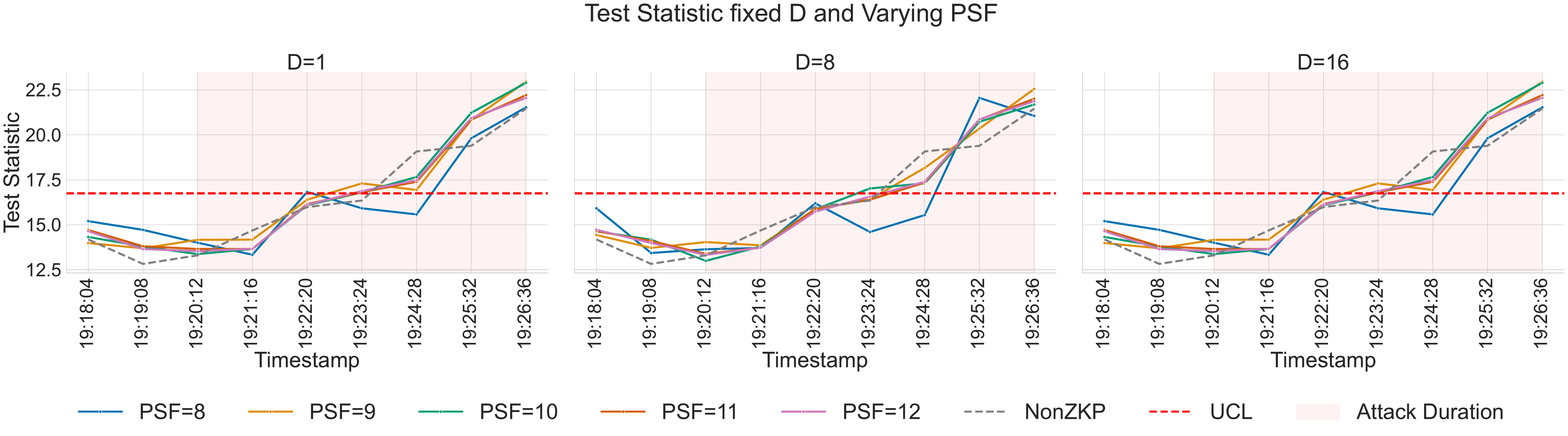}
  \caption{HAI Dataset: Detection Quality with varying PSF and fixed D}\label{fig:P1_scale}
\end{figure*}
\begin{figure*}[!htb]
    \centering
        \includegraphics[width=0.9\textwidth,keepaspectratio]{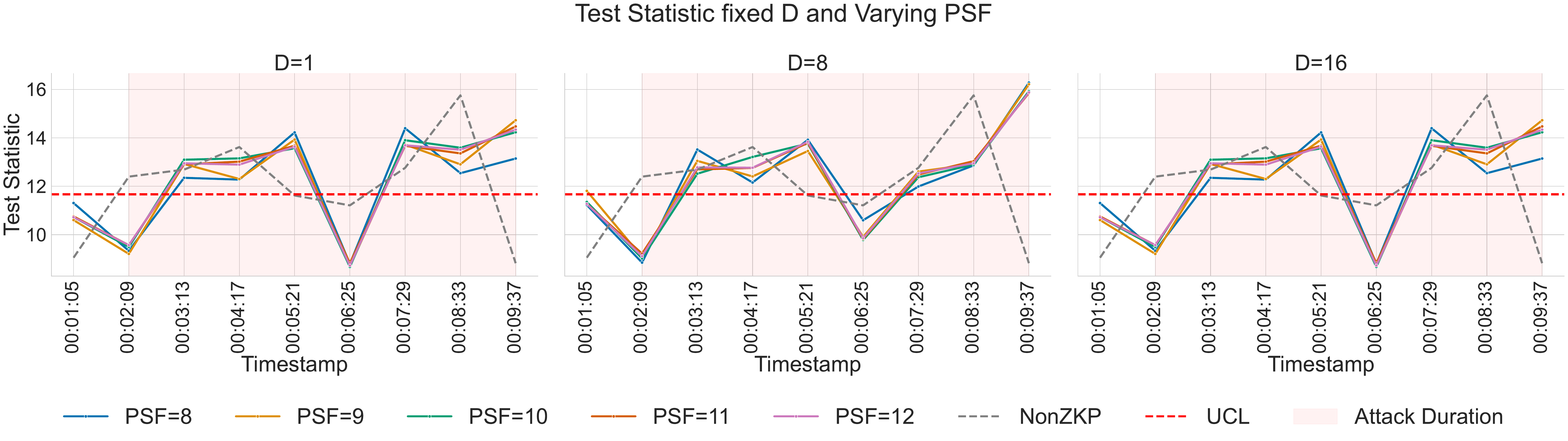}
  \caption{ORNL-PS Dataset: Detection Quality with varying PSF and fixed D}\label{fig:ps_scale}
\end{figure*}
\subsection{Detection Quality with varying PSF}\label{subsec:dq_psf}
In Figures \ref{fig:P1_scale} and \ref{fig:ps_scale}, we depict the detection quality with respect to varying PSF values for distinct values of D. In this case we can observe that the lower PSF values usually exhibit greater volatility in detection, while higher PSF values correspond to a less volatile nature of the test statistic. Additionally, we also observe a slight increase in volatility with increasing D which is expected since the errors in residual computation compound for lower PSF values with higher D values. From these experiments, we can conclude that a higher PSF value might be more desirable for a lower D. However, a mid-range PSF value might also suffice for a higher D value. Regardless, the experiments establish the robustness in detection quality of the zkSTAR framework. 

\begin{table}[t]
\centering
\begin{tabular}{|l|c|c|c|}
\hline
\textbf{Dataset} & \textbf{PK} & \textbf{VK} & \textbf{PF} \\
\hline
HAI     & 16011.879 (2271.245) & 3.878 (1.014) &  0.039 (0.004) \\
ORNL-PS & 16516.104 (2735.173) & 4.103 (1.221) &  0.040 (0.006) \\
\hline
\end{tabular}
\caption{TC Circuit: Proof (PF), Proving Key (PK), and Verification Key (VK) sizes for TC (mean(std) in MB) }\label{tab:ppkv}
\end{table}

\begin{table}[t]
\centering
\begin{tabular}{|l|c|c|c|c|c|}
\hline
\textbf{Dataset} & \textbf{PK} & \textbf{VK} & \textbf{witness} & \textbf{Circuit Size} & \textbf{PF} \\
\hline
HAI & 12931 & 3.002 & 0.023 & 0.009 & 0.037 \\
ORNL-PS     & 12931 & 3.002 & 0.016 & 0.008 & 0.035 \\
\hline
\end{tabular}
\caption{SC Circuit Proof (PF), Proving Key (PK), and Verification Key (VK) sizes for SC in MB) }\label{tab:scppkvws}
\end{table}

\begin{table}[t]
\centering
\begin{tabular}{|c|c|c|c|c|}
\hline
\textbf{D} & \multicolumn{2}{c|}{\textbf{HAI} } & \multicolumn{2}{c|}{\textbf{ORNL-PS}} \\
\cline{2-5}
\textbf{Value}& Witness & Circuit & Witness & Circuit\\
\hline
1  & 0.043(0.001) & 0.25 & 0.036(0.002) & 0.24\\
4  & 0.073(0.001) & 0.78 & 0.064(0.001) & 0.78\\
8  & 0.114(0.001) & 1.50 & 0.101(0.001) & 1.49\\
16 & 0.197(0.001) & 2.94 & 0.176(0.002) & 2.93\\
32 & 0.362(0.002) & 5.82 & 0.325(0.004) & 5.80\\
\hline
\end{tabular}
\caption{Temporal Consistency Witness Size with Varying D (mean(std) in MB)}\label{tab:tcws}
\end{table}

\subsection{Analysis of Proof and Key Sizes} \label{proof_key_sizes}
Tables \ref{tab:ppkv} and \ref{tab:scppkvws} presents the sizes of the proof, proving and verification keys for the temporal consistency and statistical consistency proofs. Table \ref{tab:scppkvws} also presents the witness size pertaining to the statistical consistency proofs. Additionally, we present the witness and the compiled circuit size for temporal consistency proofs for PSF value 8 separately in Table \ref{tab:tcws}. 

In Table \ref{tab:ppkv}, we can observe that for the case of temporal consistency, the proof size as well as proving and verification key sizes are similar for both datasets. On the other hand, in Table \ref{tab:scppkvws}, the circuit, proving and verification key size for the statistical consistency proofs were constant as expected and minor deviations are observed for the proof and witness sizes across both datasets. However, we observe an interesting trend with respect to witness and circuit sizes as represented in Table \ref{tab:tcws}. Mean witness sizes increase gradually from around 43kB to 360kB for HAI and 36kB to 325kB for ORNL-PS. On the other hand, the circuit sizes increase from a smaller value of 0.25MB at D value of 1 to 5.8MB for HAI while exhibiting a near identical trend for ORNL-PS dataset as well. This trend is natural and expected since increase in D value corresponds to larger TC interval sizes making the TC circuit grow exponentially. 

\subsection{Analysis of System Performance of ORNL-PS}\label{subsec:ps_sysper_subsec}
Figures \ref{fig:ornl-ps_wgp} and \ref{fig:ornl-ps_pptc} present the system performance characterized by CPU utilization and memory consumption for ORNL-PS dataset for witness generation and proof generation mechanism. In Table \ref{tab:sysper_sc}, we provide the CPU utilization and memory consumption for the statistical consistency proofs for HAI and ORNL-PS datasets.

\subsection{Analyses of Circuit Setup Costs}\label{subsec:cir_setup_cost}
\begin{table}
\centering
\scriptsize
\setlength{\tabcolsep}{3pt}
\renewcommand{\arraystretch}{0.85}
\begin{tabular}{|l|c|c|c|c|}
\hline
\textbf{Setup} &  \multicolumn{4}{c|}{\textbf{Setup Time (s)}} \\
\cline{2-5}
\textbf{Phase} & \textbf{D=1} & \textbf{D=4} & \textbf{D=8} & \textbf{D=16} \\
\hline
Model Export    &       1.07     &      3.29     &      7.92     &      23.88    \\
Settings Generation     &       0.74     &      3.73     &      9.80     &      29.20    \\     
Circuit Compilation     &       0.31     &      2.31     &      7.12     &      24.20    \\      
Circuit Setup   &       155.04   &      156.06   &      174.11   &      176.53  \\
\hline
\end{tabular}
\caption{HAI Setup Time Breakdown (in seconds)}\label{tab:p1_st}
\end{table}

\begin{table}
\centering
\scriptsize
\setlength{\tabcolsep}{3pt}
\renewcommand{\arraystretch}{0.85}
\begin{tabular}{|l|c|c|c|c|}
\hline
\textbf{Setup} &  \multicolumn{4}{c|}{\textbf{Setup Time (s)}} \\
\cline{2-5}
\textbf{Phase} & \textbf{D=1} & \textbf{D=4} & \textbf{D=8} & \textbf{D=16} \\
\hline
Model Export    &       1.51     &      3.25     &      7.96     &      24.21    \\      
Settings Generation     &       0.74     &      3.53     &      9.55     &      28.99    \\     
Circuit Compilation     &       0.34     &      2.24     &      7.06     &      24.90    \\      
Circuit Setup   &       162.31   &      146.84   &      162.04   &      167.55 \\
\hline
\end{tabular}
\caption{ORNL-PS Setup Time Breakdown (in seconds)}\label{tab:ps_st}
\end{table}
The mean setup times for HAI and ORNL-PS dataset has been provided in Tables \ref{tab:p1_st} and \ref{tab:ps_st} respectively for PSF values of 8, 10 and 12. These tables report zkSTAR times for model export, settings generation, circuit compilation, and circuit setup on the HAI and ORNL-PS datasets. The results exhibit trends similar to Table~\ref{tab:tcws}, primarily reflecting circuit-size dependencies inherent to zk-SNARK setup. 


\begin{figure}[!htb]
    \centering
{\includegraphics[width=0.49\textwidth,keepaspectratio]{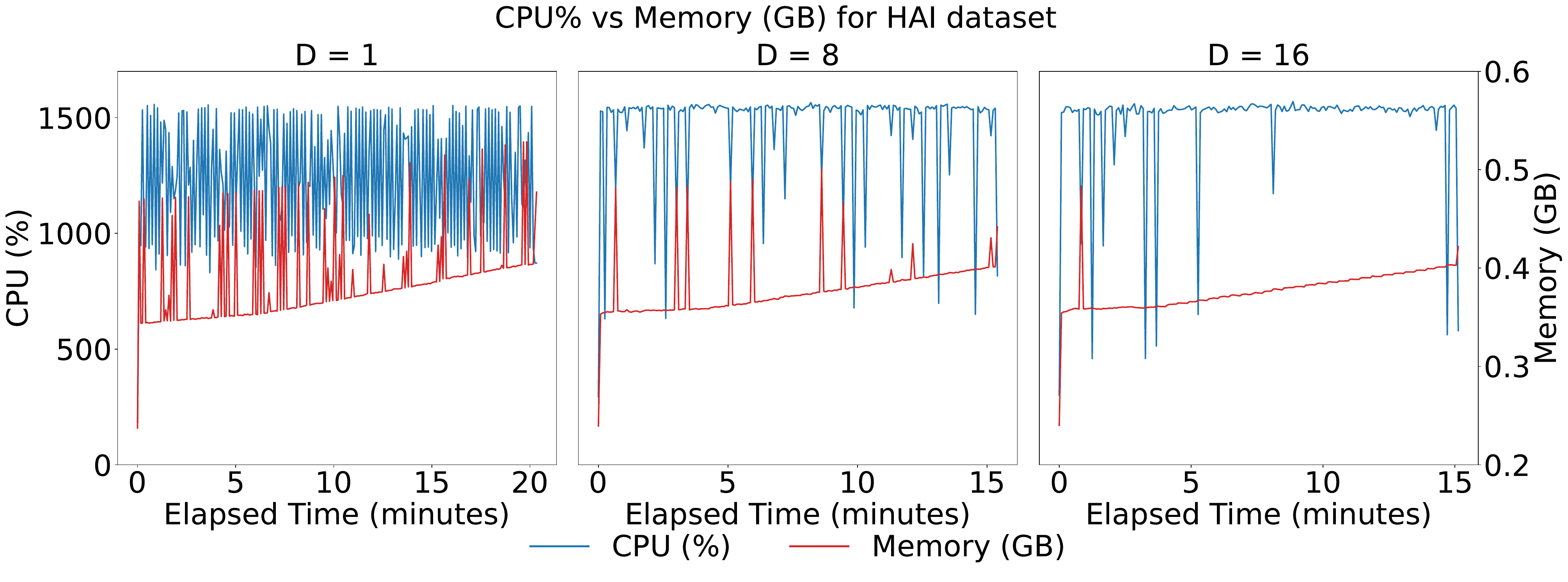}}\label{fig:wg_sysperf_hai}
  \caption{Witness Generation Performance with varying D for HAI dataset}\label{fig:p1_wgp}
\end{figure}
\begin{figure}[!htb]
    \centering
{\includegraphics[width=0.49\textwidth,keepaspectratio]{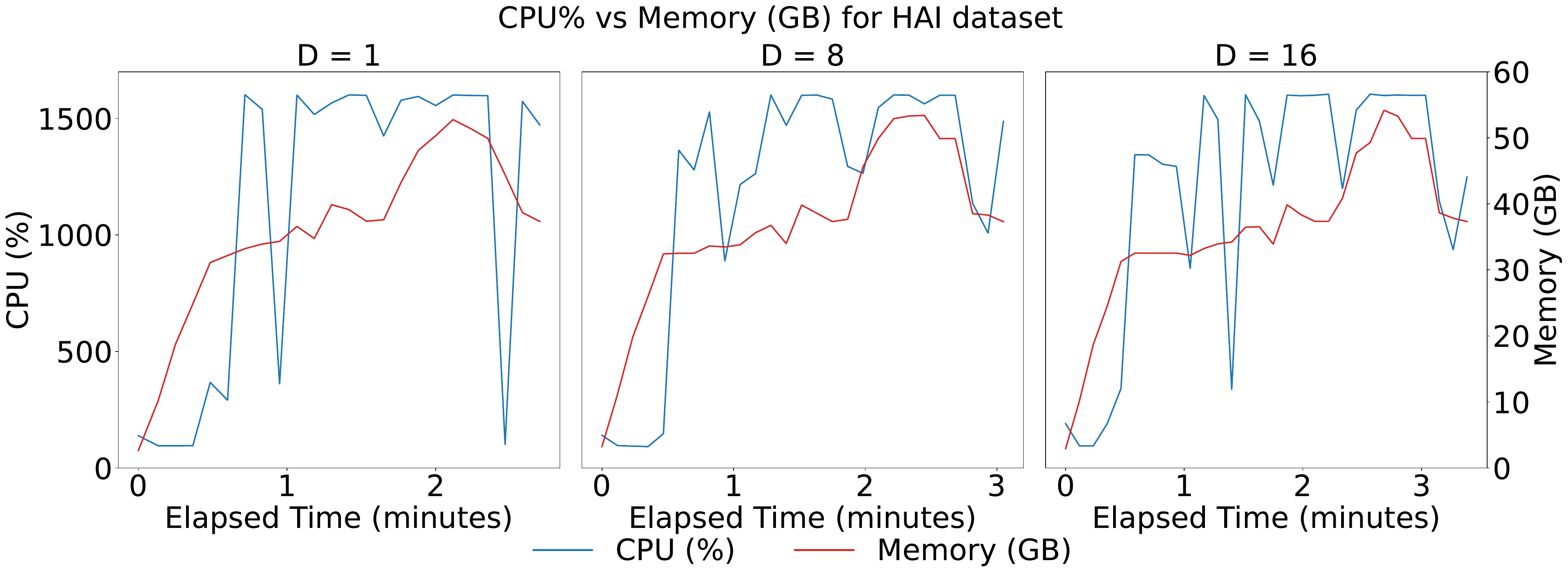}}\label{fig:pp_sysperf_hai}
  \caption{Prover Performance for Temporal Consistency  with varying D for HAI dataset}\label{fig:ps_wgp}
\end{figure}

\subsection{Witness and Proof Generation Performance}\label{witproof_gen}
Figures \ref{fig:p1_wgp} and \ref{fig:ps_wgp} present trends pertaining to the CPU utilization and memory consumption for only the HAI dataset. We provide the corresponding trends for ORNL-PS in Appendix \ref{subsec:ps_sysper_subsec} due to space constraints. The results show a steady and predictable scaling trend, with CPU utilization peaking around 1600\%, demonstrating effective parallelization, and memory usage reaching approximately 55~GB, well within available system capacity. From Figure \ref{fig:p1_wgp}, we can see greater volatility in CPU utilization and memory consumption for D value of 1 eventually stabilizing for D values of 8 and 16. The underlying reason for this trend is that lower values of D demand more frequent proving steps that results in higher volatility in CPU utilization and memory consumption. As D grows to 8 and 16, the proving time stabilizes on account of less frequent proving steps encountered during the simulation horizon. These results confirm that zkSTAR efficiently leverages multi-core resources with stability in memory consumption under increasing computational load.

\subsection{Verification Times}\label{subsec:verification_times}

\begin{table}[t]
\centering
\scriptsize
\setlength{\tabcolsep}{3pt}
\renewcommand{\arraystretch}{0.85}
\caption{Verification times for HAI and ORNL-PS datasets (PSF=10).}
\label{tab:verification_times}
\begin{tabular}{|c|c|c|c|c|c|}
\hline
\multicolumn{3}{|c|}{\textbf{HAI}} & \multicolumn{3}{c|}{\textbf{ORNL-PS}} \\ \hline
\textbf{D} & \textbf{Type} & \textbf{Time (s)} & \textbf{D} & \textbf{Type} & \textbf{Time (s)} \\ \hline
1  & prediction & 1.22 & 1  & prediction & 1.26 \\ 
8  & prediction & 1.34 & 8  & prediction & 1.40 \\ 
16 & prediction & 1.09 & 16 & prediction & 1.31 \\ 
16 & hypothesis  & 1.24 & 16 & hypothesis  & 1.46 \\ \hline
\end{tabular}
\end{table}
Table \ref{tab:verification_times} summarizes representative verification latencies for zkSTAR proofs under varying TC interval sizes of 1,8 and 16 for prediction and hypothesis test components.
Across all scenarios—including both normal and attack conditions—verification completed in approximately 1.0–1.5 s on commodity hardware, with all proofs passing successfully.
These results confirm that zkSTAR’s zero-knowledge verification incurs minimal overhead and is practical for near–real-time regulatory audits.

\end{document}